\def\whp{with high probability\xspace}
\newcommand{\REMOVE}[1]{}
\newcommand{\veps}{\varepsilon}
\newcommand{\tuple}[1]{{\langle{#1}\rangle}}
\newcommand{\minn}[1]{\min\{#1\}}
\newcommand{\aset}[1]{\{#1\}}
\DeclareMathOperator{\supp}{supp}
\DeclareMathOperator{\diam}{diam}
\DeclareMathOperator{\EX}{\mathbb E}
\DeclareMathOperator{\LP}{\mathrm{LP}}
\newtheorem{theorem}{Theorem}[section]
\newtheorem{lemma}[theorem]{Lemma}
\newtheorem{corollary}[theorem]{Corollary}
\theoremstyle{definition}
\newtheorem{definition}[theorem]{Definition}
\def\compactify{\itemsep=0pt \topsep=0pt \partopsep=0pt \parsep=0pt}
\def\Lovasz{Lov\'asz\xspace}
\newcounter{this-list}
\newcounter{par-list}
\newlength{\parlistlength}
\begin{document}


\title{Fault-Tolerant Spanners: Better and Simpler}
\author{Michael Dinitz\thanks{Email: \href{mailto:michael.dinitz@weizmann.ac.il}{\texttt{michael.dinitz@weizmann.ac.il}}} \qquad\qquad
Robert Krauthgamer\thanks{Supported in part by The Israel Science Foundation (grant \#452/08), and by a Minerva grant.
Email: \href{mailto:robert.krauthgamer@weizmann.ac.il}{\texttt{robert.krauthgamer@weizmann.ac.il}}
}
\\
Weizmann Institute of Science
}

\begin{titlepage}
\maketitle
\thispagestyle{empty}

\begin{abstract}
A natural requirement of many distributed structures is \emph{fault-tolerance}: after some failures, whatever remains from the structure should still be effective for whatever remains from the network.  In this paper we examine spanners of general graphs that are tolerant to vertex failures, and significantly improve their dependence on the number of faults $r$, for all stretch bounds.

For stretch $k \geq 3$ we design a simple transformation that converts \emph{every} $k$-spanner construction with at most $f(n)$ edges into an $r$-fault-tolerant $k$-spanner construction with at most  $O(r^3 \log n) \cdot f(2n/r)$ edges.  Applying this to standard greedy spanner constructions gives $r$-fault tolerant $k$-spanners with $\tilde O(r^{2} n^{1+\frac{2}{k+1}})$ edges.  The previous construction by Chechik, Langberg, Peleg, and Roddity [STOC 2009] depends similarly on $n$ but \emph{exponentially} on $r$ (approximately like $k^r$).

For the case $k=2$ and unit-length edges, an $O(r \log n)$-approximation algorithm is known from recent work of Dinitz and Krauthgamer [arXiv 2010], where several spanner results are obtained using a common approach of rounding a natural flow-based linear programming relaxation.  Here we use a different (stronger) LP relaxation and improve the approximation ratio to $O(\log n)$, which is, notably, \emph{independent} of the number of faults $r$.  We further strengthen this bound in terms of the maximum degree by using the \Lovasz Local Lemma.

Finally, we show that most of our constructions are inherently local by designing equivalent distributed algorithms in the $\mathcal{LOCAL}$ model of distributed computation.
\end{abstract}
\end{titlepage}

\section{Introduction}

Let $G=(V,E)$ be a graph, possibly with edge-lengths $\ell : E \rightarrow \mathbb{R}_{\geq 0}$.
A \emph{$k$-spanner} of $G$, for $k\ge 1$, is a subgraph $G'=(V,E')$ that preserves all pairwise distances within factor $k$, i.e.~for all $u,v\in V$,
\begin{equation} \label{eq:defn}
  d_{G'}(u,v) \leq k\cdot d_G(u,v).
\end{equation}
Here and throughout, $d_H$ denotes the shortest-path distance in a graph $H$,
and $n=|V|$.  The distance preservation factor $k$ is called the \emph{stretch} of the spanner.  It is easy to see that requiring \eqref{eq:defn} only for edges $(u,v)\in E$ suffices.  This definition also extends naturally to \emph{directed} graphs.  Obviously $G$ is a $1$-spanner of itself, so usually the goal is to compute a ``small'' spanner.  Two traditional notions of ``small'' are the number of edges in $G'$ (called the \emph{size} of $G'$), and the \emph{weight} of $G'$ (where the weight of a graph is the sum of the lengths of the edges in the graph).  If every edge has unit length then these two notions are the same, but for more general edge lengths they can be quite different.

This notion of graph spanners, first introduced by Peleg and Sch{\"a}ffer~\cite{PS89} and Peleg and Ullman~\cite{PU89}, has been studied extensively, with applications ranging from routing in networks (e.g. \cite{AP95,TZ05}) to solving linear systems (e.g. \cite{ST04a,EEST08}).  Many of these applications, especially in distributed computing, arise by modeling computer networks or distributed systems as graphs.  But one aspect of distributed systems that is not captured by the above spanner definition is the possibility of \emph{failure}.  We would like our spanner to be robust to failures, so that even if some nodes fail we still have a spanner of what remains.  More formally, $G'$ is an $r$-fault tolerant $k$-spanner of $G$ if for every set $F \subseteq V$ with $|F| \leq r$, the spanner condition holds for $G \setminus F$, i.e. for all $u,v \in V \setminus F$ we have $d_{G' \setminus F}(u,v) \leq k \cdot d_{G \setminus F}(u,v)$.

This notion of fault-tolerant spanners was first introduced by Levcopoulos, Narasimhan, and Smid~\cite{LNS98} in the context of geometric spanners (the special case when the vertices are in Euclidean space and the distance between two points is the Euclidean distance).  They provided both size and weight bounds for $(1+\epsilon)$-spanners, which were later improved by Lukovski~\cite{Luk99} and Czumaj and Zhao~\cite{CZ03}.
The first result on fault-tolerant spanners for general graphs,
by Chechik, Langberg, Peleg, and Roditty~\cite{CLPR09},
constructs $r$-fault tolerant $(2k-1)$-spanners with size $O(r^2 k^{r+1} \cdot n^{1+ 1/k} \log^{1- 1/k} n)$, for any integer $k \geq 1$.  Since it has long been known how to construct $(2k-1)$-spanners with size $O(n^{1+ 1/k})$ (see e.g~\cite{ADDJS93}), this means that the extra cost of $r$-fault tolerance is $O(r^2 k^{r+1})$.  While this is independent of $n$, it grows rapidly as the number of faults $r$ gets large.
We address an important question they left open
of improving this dependence on $r$ from exponential to polynomial.

Nontrivial absolute bounds on the size of a $k$-spanner are possible only when the stretch $k \geq 3$.  For $k=2$, there are graphs with $\Omega(n^2)$ edges for which every edge must be included in the spanner (e.g., a complete bipartite graph).  So the common approach is to provide relative bounds, namely, design approximation algorithms for the problem of computing a minimum size/weight $r$-fault tolerant $2$-spanner.  In this context one assumes that all edges have unit length, so the size equals the weight.  Without fault tolerance, the problem is reasonably well understood: there are algorithms that provide an $O(\log n)$-approximation~\cite{KP94, EP01} (or, with some extra effort, an $O(\log (|E|/|V|))$-approximation), and the problem is NP-hard to approximate better than $\Omega(\log n)$~\cite{Kortsarz01}.
For the $r$-fault tolerant $2$-spanner problem,
Dinitz and Krauthgamer~\cite{DK10} recently gave an $O(r \log n)$-approximation.
However, they did not provide evidence that this loss of $r$ was necessary,
an issue that we address in this paper.

\subsection{Results and Techniques}

\paragraph{Stretch bounds $k \geq 3$.}
Here, our main result is a new $r$-fault tolerant $k$-spanner with size
that depends only polynomially on $r$,
thereby improving over the exponential dependence by Chechik et al.~\cite{CLPR09}.

\begin{theorem} \label{thm:FTkSpanner}
For every graph $G=(V,E)$ with positive edge-lengths and odd $k \geq 3$, there is an $r$-fault tolerant $k$-spanner with size $O(r^{2-\frac{2}{k+1}} n^{1+\frac{2}{k+1}} \log n)$.
\end{theorem}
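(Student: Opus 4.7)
The plan is to use a random-sampling transformation that reduces the fault-tolerant $k$-spanner problem to computing ordinary $k$-spanners on random induced subgraphs. Concretely, set $p := 2/r$ and $t := c\, r^{3}\log n$ for a sufficiently large constant $c$, independently sample vertex subsets $S_{1},\dots,S_{t}\subseteq V$ where each vertex is included in $S_{i}$ with probability $p$, and for each $i$ compute a (non-fault-tolerant) $k$-spanner $H_{i}$ of the induced subgraph $G[S_{i}]$ using the standard greedy construction, which for odd $k$ yields $|E(H_{i})| = O\bigl(|S_{i}|^{1+2/(k+1)}\bigr)$~\cite{ADDJS93}. The candidate fault-tolerant spanner is $H := \bigcup_{i=1}^{t} H_{i}$.

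For correctness, it suffices to verify the spanner inequality on edges only, as noted after~\eqref{eq:defn}. Fix an edge $(u,v)\in E$ and any fault set $F\subseteq V\setminus\{u,v\}$ with $|F|\leq r$: if some $S_{i}$ contains both $u$ and $v$ and is disjoint from $F$, then $(u,v)$ is an edge of $G[S_{i}]$, and $H_{i}$ already contains a $u$--$v$ path of length at most $k\cdot\ell(u,v)$ lying entirely in $S_{i}\subseteq V\setminus F$, hence in $H\setminus F$. A single trial is ``good'' for this triple $(u,v,F)$ with probability $p^{2}(1-p)^{|F|} = \Omega(1/r^{2})$, so the probability that all $t$ trials fail is at most $(1-\Omega(1/r^{2}))^{t}\leq e^{-\Omega(t/r^{2})} = n^{-\Omega(r)}$, which (for $c$ large enough) beats a union bound over the $|E|\cdot\binom{n}{\leq r}\leq n^{O(r)}$ choices of $(u,v,F)$.

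For the size, each $|S_{i}|$ is binomially distributed with mean $2n/r$, so a Chernoff bound plus a union bound over $i$ gives $|S_{i}|\leq 4n/r$ for all $i$ with high probability (the degenerate range $n = O(r)$ is handled trivially by taking $H$ to be all of $G$). Therefore
\[
  |E(H)| \leq \sum_{i=1}^{t} O\bigl(|S_{i}|^{1+2/(k+1)}\bigr) = O\bigl(r^{3}\log n \cdot (n/r)^{1+2/(k+1)}\bigr) = O\bigl(r^{2-\frac{2}{k+1}}\, n^{1+\frac{2}{k+1}}\, \log n\bigr),
\]
and a final union bound shows that both the correctness and size events hold simultaneously with positive probability, establishing existence. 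The main technical point will be selecting the sampling rate $p = \Theta(1/r)$: it must be small enough that $S_{i}\cap F = \emptyset$ has constant probability, yet large enough that both endpoints of an edge land in $S_{i}$. The choice $p = 2/r$ achieves a per-trial success rate of $\Theta(1/r^{2})$, and combined with the $n^{O(r)}$ union bound over fault sets this forces $t = \Theta(r^{3}\log n)$ trials; the resulting $r^{3}$ overhead is more than recouped by the factor $(2n/r)^{1+2/(k+1)}$ gain in per-subset spanner size, turning the exponential-in-$r$ bound of Chechik et al.\ into a polynomial one.
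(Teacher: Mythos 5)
Your proof is, in essence, the paper's proof: in each of $\Theta(r^3\log n)$ independent trials, retain each vertex with probability $\Theta(1/r)$, run a greedy $k$-spanner on the induced subgraph, and union; correctness follows from an $\Omega(1/r^2)$ per-trial success probability for each pair $\bigl((u,v),F\bigr)$ together with a union bound over all $n^{O(r)}$ such pairs, and the size bound from a Chernoff estimate on the sample sizes. The only differences are two small parameter slips. First, your retention probability $p = 2/r$ exceeds $1$ for $r=1$, and equals $1$ for $r=2$; in the latter case $(1-p)^{|F|} = 0$ for every nonempty fault set, so the claimed per-trial success probability $\Omega(1/r^2)$ does not hold. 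The paper instead uses retention rate $1/r$ (handling $r=1$ separately with rate $1/2$), which gives $(1/r)^2(1-1/r)^r \geq 1/(4r^2)$ uniformly for all $r \geq 2$. Second, your ``degenerate range'' $n = O(r)$ is too narrow: the Chernoff-plus-union bound that guarantees $|S_i| \leq 4n/r$ across all $t \approx r^3\log n$ trials requires roughly $n/r = \Omega(\log n)$, while the target size bound becomes vacuous (exceeds $n^2$) already for $r \gtrsim n^{1/3}$ when $k=3$; you need to invoke the trivial argument over that whole range, not just $r = \Omega(n)$. The paper does this by assuming $r \leq n^{2/3}$, which sits comfortably inside both regimes.
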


In fact, we prove something slightly stronger: a general conversion theorem that turns any algorithm for constructing $k$-spanners with size $f(n)$ into an algorithm for constructing $r$-fault tolerant $k$-spanners with size $O(r^3 \log n \cdot f(2n/r)$.  Applying this conversion to the well-known greedy spanner algorithm
(see e.g.~\cite{ADDJS93}) immediately yields Theorem \ref{thm:FTkSpanner}.

At a high level, Chechik et al.~\cite{CLPR09} apply the spanner construction
of Thorup and Zwick~\cite{TZ05} to every possible fault set,
eventually taking the union of all of these spanners.
They show, through a rather involved analysis that relies on specific
properties of the Thorup-Zwick construction, that taking a union over as many
as $O(n^r)$ spanners increases the size bound only by an $O(r^2 k^r)$ factor.  Our conversion technique, on the other hand, is extremely general.  Inspired by the \emph{color-coding} technique of Alon, Yuster, and Zwick~\cite{AYZ95} and its recent incarnation in designing data structures and oracles~\cite{WY10}, we randomly sample nodes to act as a fault set, and then apply a generic spanner algorithm on what remains.  Our sampling dramatically oversamples nodes --- instead of fault sets of size $r$, we end up with sampled fault sets of size approximately $(1-\frac1r)n$.  This allows us to satisfy many fault sets of size $r$ with a single iteration of the generic algorithm.  The size bound follows almost immediately.

\paragraph{Stretch $k=2$ (and assuming unit-length edges).}
Here, our main result is an approximation algorithm with ratio that is \emph{independent} of $r$.  Our algorithm actually works in an even more general setting, where the graph is directed and edges have \emph{costs} $c_e:E\to{\mathbb R}_{\geq 0}$.  The goal is to find an $r$-fault tolerant $2$-spanner of minimum total cost.  We refer to this problem as {\sc Minimum Cost $r$-Fault Tolerant $2$-Spanner}.

\begin{theorem}\label{thm:FT2spanner}
For every $r \leq n$, there is a (randomized) $O(\log n)$-approximation algorithm for {\sc Minimum Cost $r$-Fault Tolerant $2$-Spanner}.
\end{theorem}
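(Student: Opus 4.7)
My plan is to design an LP relaxation strictly stronger than the flow-based one of~\cite{DK10} and round it by independent randomized rounding at rate $\alpha=\Theta(\log n)$. The LP uses a variable $x_e\in[0,1]$ per directed edge, together with auxiliary variables $f_{uv}^w\in[0,1]$ for each $(u,v)\in E$ and each potential intermediate vertex $w\notin\{u,v\}$ such that $(u,w),(w,v)\in E$. The crucial strengthened constraints are, for every edge $(u,v)\in E$ and every vertex subset $F$ with $|F|\le r$,
\[
  x_{uv} + \sum_{w\notin F\cup\{u,v\}} f_{uv}^w \;\ge\; 1,
\]
alongside the domination inequalities $f_{uv}^w \le x_{uw}$ and $f_{uv}^w \le x_{wv}$, and the objective $\min \sum_e c_e x_e$. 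There are exponentially many constraints, but a polynomial-time separation oracle simply identifies, for each pair $(u,v)\in E$, the $r$ intermediates with the largest $f_{uv}^w$ values and checks whether excluding them violates the inequality; hence the LP is solvable in polynomial time.

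After solving the LP, I would include each edge $e$ in the spanner independently with probability $\min(1,\alpha x_e)$. The expected cost is $\alpha\cdot\LP = O(\log n)\cdot\mathrm{OPT}$ immediately. For correctness, the key reformulation is that it suffices to establish, for every $(u,v)\in E$, the local condition that either $(u,v)\in E'$ or the witness set $W_{uv} := \{w : (u,w),(w,v)\in E'\}$ has size at least $r+1$. This local condition alone certifies the $r$-fault tolerant $2$-spanner guarantee for $(u,v)$ against every admissible fault set $F$ with $u,v\notin F$ and $|F|\le r$, so the final union bound can be taken over only the at most $n^2$ edges of $E$ rather than the exponentially many fault sets. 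This is precisely where the factor of $r$ is saved relative to the $O(r\log n)$ bound of~\cite{DK10}.

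The heart of the argument, and the main obstacle, is to show that the local condition holds with probability at least $1-n^{-3}$ for each fixed $(u,v)$. The naive estimate $\Pr[w\in W_{uv}]= \alpha^2 x_{uw}x_{wv}$ (when both edges are below the truncation threshold $1/\alpha$) is only $\alpha^2(f_{uv}^w)^2$ in the worst case, so if the LP mass were concentrated on a few tiny $f$-values, the squaring would destroy concentration. The per-$F$ constraints are precisely what prevents this pathology: removing \emph{any} set $F$ of up to $r$ intermediates still leaves fractional $f$-mass $\ge 1-x_{uv}$ on the remaining intermediates, which forces the distribution $\{f_{uv}^w\}_w$ to spread its mass broadly rather than concentrate on a small few. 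A careful case analysis---separating the regime $\alpha x_{uv}\ge 1$ (where the edge lies in $E'$ deterministically) from the regime in which the intermediates must carry the load, and further splitting the latter into ``heavy'' intermediates with $\alpha\min(x_{uw},x_{wv})\ge 1$ and ``light'' intermediates to which Chernoff is applied---should convert this fractional spread into the lower bound $\EX|W_{uv}| = \Omega((r+1)\log n)$, so that Chernoff delivers $|W_{uv}|\ge r+1$ with high probability, uniformly across the full range $1\le r\le n$.
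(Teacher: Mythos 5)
Your proposal correctly identifies two of the paper's key structural moves --- the combinatorial characterization that an $r$-fault tolerant $2$-spanner must contain either $(u,v)$ or $r+1$ length-two $u$-$v$ paths (this is Lemma~\ref{lem:FT_char}), and the observation that the union bound should run over the $O(n^2)$ edges rather than the exponentially many fault sets --- and your separation oracle (pick the $r$ largest $f$-values) matches the one in Lemma~\ref{lem:solve}. However, there are two genuine gaps, each fatal on its own.

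\emph{The LP is too weak.} Although you state the ``$r+1$ witnesses'' characterization for the integral solution, your LP constraint $x_{uv} + \sum_{w\notin F} f_{uv}^w \ge 1$ only forces the residual flow mass (after removing any $r$ intermediates) to be at least $1-x_{uv}$, not on the order of $r+1$. Concretely, on the directed complete graph $K_n$ with unit costs, setting $x_e = f_{uv}^w = 1/(n-r-2)$ on every edge and path is feasible for your LP, so $\LP = O(n^2/(n-r)) = O(n)$ for $r\le cn$, while any integral solution needs $\ge r+1$ out-edges per vertex and hence cost $\Omega(rn)$. Thus your LP already has an $\Omega(r)$ integrality gap, and no rounding scheme can produce an $O(\log n)$-approximation from it. The paper instead starts from the IP $(r+1)x_{uv} + \sum_{P\in\mathcal P_{u,v}} f_P \ge r+1$ --- where the right-hand side really is $r+1$ --- and adds the knapsack-cover inequalities $(r+1-|W|)x_{uv} + \sum_{P\notin W} f_P \ge r+1-|W|$; these force the \emph{total} surviving flow to scale linearly in $r$, which is exactly what your formulation lacks.

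\emph{The rounding is wrong, and the obstacle you flag is real, not a technicality.} Independent edge rounding at rate $\alpha x_e$ makes $\Pr[w\in W_{uv}]\approx \alpha^2 x_{uw}x_{wv}\approx \alpha^2 (f_{uv}^w)^2$, and the LP constraints do \emph{not} rule out the flow being spread over $\Theta(r)$ or more intermediates each carrying flow $\Theta(1/r)$; in that regime the expected witness count is $O(\alpha^2/r)\ll r+1$ once $r\gg \log n$, so no ``case analysis'' can push this through. The paper's fix is a different randomization: each vertex $v$ draws a single threshold $T_v\in[0,1]$, and $(u,v)$ is kept iff $\min\{T_u,T_v\}\le\alpha x_{uv}$. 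Then the event $T_z\le\alpha f_P$ for $P=(u,z,v)$ already implies \emph{both} edges of $P$ are kept, so the path is realized with probability $\min\{1,\alpha f_P\}$ --- \emph{linear} in $f_P$, not quadratic. Coupled with the knapsack-cover constraint for the set of high-flow paths, this gives expected residual witnesses $\Omega(\alpha r')$ with $r'=r+1-i^*$, and Chernoff plus a union bound over edges finishes. So the two missing ingredients are the right-hand side of $r+1$ (with knapsack-cover strengthening) in the LP, and vertex-threshold rounding in place of independent edge rounding.
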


This improves over the previously known $O(r \log n)$-approximation~\cite{DK10}.
Similarly to~\cite{DK10}, we design a flow-based linear programming (LP) relaxation
of the problem and then apply a rounding scheme
that uses randomization at the vertices, rather than naively at the edges.
However, the relaxation used by~\cite{DK10} is not strong enough to achieve
approximation factor independent of $r$;
even simple graphs (such as the complete graph with unit costs)
have integrality gaps of $\Omega(r)$.
We thus design a different relaxation,
and add to it a large family of constraints that are essentially the \emph{knapsack-cover inequalities} of Carr, Fleischer, Leung, and Phillips~\cite{CFLP00},
adapted to our context.
With these additional constraints, we are able to show that the simple rounding scheme devised in \cite{DK10} now achieves an $O(\log n)$-approximation.

We further show that the integrality gap is at most $O(\log \Delta)$, where $\Delta$ is the maximum degree of the graph, in the special case where all edge costs are $1$.  Note that this bound is at least as good as the $O(\log n)$ bound (and possibly better).  We prove this by a more careful analysis of essentially the same randomized rounding scheme using the \Lovasz Local Lemma.  This makes the result non-algorithmic -- it only shows that the rounding scheme succeeds with a positive probability.

\paragraph{Distributed versions of our algorithms.}
Finally, one feature that is shared by both the $k=2$ and the $k \geq 3$ case is that the algorithms are \emph{local} (assuming that the generic algorithm used by the conversion theorem is itself local).  To show this formally, we provide distributed versions of the algorithm in the $\mathcal{LOCAL}$ model of distributed computation.  The $\mathcal{LOCAL}$ model is a standard message-passing model in which in each round, every node is allowed to send an unbounded-size message to each of its neighbors~\cite{Peleg_book}.  While the unbounded message-size assumption may not be realistic, this model captures locality in the sense that in $t$ rounds, each node has knowledge of, and is influenced by, only the nodes that are within (hop-)distance $t$ of it.

Assuming that the underlying generic spanner algorithm is distributed in this sense, our general conversion theorem trivially provides a distributed algorithm since the failure sampling is done independently by every edge.  Designing a distributed version of the $r$-fault tolerant $2$-spanner algorithm is not quite as simple, since our centralized algorithm uses the Ellipsoid method to solve a linear program that has an exponential number of constraints.  While there is a significant amount of literature on solving linear programs in a distributed manner, much of the time strong assumptions are made about the structure of the linear program.  In particular, it is common to assume that the LP is a \emph{positive} (i.e.~a packing/covering) LP.  Unfortunately the LP relaxation that we use is not positive, even for $r=0$, so we cannot simply use an off-the-shelf distributed LP solver.  Instead, we leverage the fact that the LP itself is ``mostly'' local --- we partition the graph into clusters, solve the LP separately on each cluster, and then repeat this process several times, eventually taking the average values.  This technique is quite similar to the work of Kuhn, Moscibroda, and Wattenhofer~\cite{KMW06}, who showed how to approximately solve positive LPs using the graph decompositions of Linial and Saks~\cite{LS93}.  We construct padded decompositions using a variant of the methods developd by Bartal~\cite{Bartal96} and by Linial and Saks~\cite{LS93}. Combining this distributed methodology for solving the LP relaxation together with the obvious distributed implementation of the aforementioned rounding scheme, we obtain the following distributed $O(\log n)$-approximation.

\begin{theorem}
There is a randomized algorithm that takes $O(\log^2 n)$ rounds and gives an $O(\log n)$-approximation for {\sc Minimum Cost $r$-Fault Tolerant $2$-Spanner} in the $\mathcal{LOCAL}$ model of distributed computation.
\end{theorem}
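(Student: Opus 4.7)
The plan is to decouple the algorithm into (i) approximately solving the strengthened LP relaxation used in the proof of Theorem~\ref{thm:FT2spanner}, and (ii) applying the vertex-based randomized rounding from that same proof. Step (ii) is already inherently local: once every vertex knows its own LP-variable values, it samples $O(\log n)$ independent random labels, and each edge $(u,v)$ decides whether to join the spanner based solely on the labels of $u$, $v$, and their common neighbors. Because the $2$-spanner condition is witnessed by paths of hop-length at most $2$, the entire rounding phase executes in $O(1)$ rounds in the $\mathcal{LOCAL}$ model. Hence the whole round-complexity budget of $O(\log^2 n)$ is to be spent on the LP.

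To solve the LP distributively I would follow the general methodology of Kuhn, Moscibroda, and Wattenhofer~\cite{KMW06}, adapted to this non-positive LP. The crucial structural feature is that every variable and every constraint (including every knapsack-cover inequality) refers only to vertices inside a $2$-hop ball: the constraint associated with an edge $(u,v)$ involves only vertices in $N(u)\cup N(v)$. Consequently, for any cluster of sufficiently large radius, a vertex whose $2$-hop neighborhood lies inside the cluster, called \emph{well-padded}, has all of its incident LP data contained locally. I would compute a padded probabilistic decomposition of $G$ into clusters of hop-diameter $O(\log n)$ using a Bartal/Linial--Saks-style construction~\cite{Bartal96,LS93}, ensuring that each vertex is well-padded with constant probability. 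Each cluster then aggregates its local subgraph at a leader in $O(\log n)$ rounds (the $\mathcal{LOCAL}$ model allows unbounded messages) and solves the restricted LP centrally via the Ellipsoid method using the polynomial-time knapsack-cover separation oracle from the proof of Theorem~\ref{thm:FT2spanner}; the leader then broadcasts the computed values in another $O(\log n)$ rounds. A vertex that is well-padded in the chosen decomposition records the values it receives.

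Repeating this procedure $\Theta(\log n)$ times with independent decompositions, each vertex is well-padded in a constant fraction of iterations with high probability, and we set its final LP value to a scaled average over those iterations. Linearity of expectation bounds the expected cost of the averaged solution by $O(1)\cdot\LP^*$, while for any single constraint a Chernoff bound over the $\Theta(\log n)$ independent trials, together with a union bound over the polynomially many \emph{local} constraint families, ensures feasibility up to a constant rescaling. Pipelining this with the $O(1)$-round rounding yields the claimed $O(\log n)$-approximation in $O(\log^2 n)$ rounds.

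The main obstacle I expect is that the knapsack-cover family has exponential size globally, so naively one cannot even write down the constraints each cluster must satisfy. The resolution is precisely that knapsack-cover separation is local and polynomial: a well-padded cluster never enumerates these constraints but rather invokes the Ellipsoid method on its own polynomial-size subproblem, and the padding guarantee certifies that the local optimum assigns well-padded vertices values that lift consistently into a globally near-feasible fractional solution. Making the ``lifting'' precise, in particular handling the vertices that are \emph{not} well-padded in a given iteration so that they receive enough probability mass over the $\Theta(\log n)$ rounds to obtain feasible values, is the technically delicate step, and I would handle it exactly as in \cite{KMW06} by choosing the scaling factor equal to the reciprocal of the per-iteration padding probability.
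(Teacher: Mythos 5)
Your proposal matches the paper's proof essentially step for step: a Bartal/Linial--Saks-style padded decomposition into $O(\log n)$-diameter clusters, solving the cluster-restricted LP via Ellipsoid with the local knapsack-cover separation oracle, repeating $\Theta(\log n)$ times and taking a scaled average (with a Chernoff bound certifying that each edge is handled often enough), followed by the $O(1)$-round vertex-threshold rounding. The only cosmetic difference is that the paper observes that for an edge $(u,v)$ all relevant LP data lies in $N(u)\cup\{u\}$ (since every length-$2$ path midpoint and $v$ itself are neighbors of $u$), so $1$-hop padding of a designated endpoint suffices, whereas you invoke $2$-hop padding -- this changes only constants and not the $O(\log^2 n)$ round bound.
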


 \section{General $k$} \label{sec:general}
In this section we give our construction of $r$-vertex-tolerant $k$-spanners (with arbitrary edge-lengths).
For each $F \subseteq V$ with $|F| \leq r$, we let $E_F$ denote the edges of $G \setminus F$, i.e.~$E_F = \{\{u,v\} \in E : u,v\not\in F\}$.  We first give a general conversion theorem that turns any $k$-spanner construction into an $r$-fault tolerant $k$-spanner construction at an extra cost of at most $poly(r) \cdot \log n$.  This conversion actually works fine even when the underlying spanner construction is randomized, but since good deterministic constructions exist we will assume for simplicity that the underlying construction is deterministic.  We say that an event happens with high probability if it happens with probability at least $1-\frac{1}{n^C}$ for constant $C$ that can be made arbitrarily large (at the cost of increasing the constants hidden by $O(\cdot)$ notation).

\begin{theorem} \label{thm:conversion}
If there is an algorithm that on every graph builds a $k$-spanner of size $f(n)$, then there is an algorithm that on any graph builds with high probability an $r$-fault tolerant $k$-spanner of size $O(r^3 \log n \cdot f(\frac{2n}{r}))$.
\end{theorem}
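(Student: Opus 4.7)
The plan is to repeat $t = \Theta(r^3 \log n)$ independent sampling rounds. In round $i$, include every vertex independently in a virtual fault set $S_i \subseteq V$ with probability $q := r/(r+2)$, apply the given $k$-spanner algorithm to the induced subgraph $G[V \setminus S_i]$ (with inherited edge-lengths) to produce a subgraph $H_i$, and output $G' := \bigcup_{i=1}^{t} H_i$. Since $|V \setminus S_i|$ has expectation $2n/(r+2)$, a standard Chernoff bound gives $|V \setminus S_i| = O(n/r)$ whp in every round; combined with monotonicity of $f$, this yields $|G'| \leq t \cdot f(O(n/r)) = O(r^3 \log n \cdot f(2n/r))$, matching the claimed size.

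For correctness, I will use the fact noted earlier that the spanner condition only needs to be verified on edges of the underlying graph. Fix a triple $(F, u, v)$ with $|F| \leq r$ and $\{u,v\} \in E_F$, and call a round $i$ \emph{good for the triple} if $F \subseteq S_i$ and $u, v \notin S_i$. In any good round, $\{u,v\}$ is an edge of $G[V \setminus S_i]$, so $H_i$ contains a $u$-to-$v$ path of length at most $k \cdot \ell(u,v)$; because this path lies entirely in $V \setminus S_i \subseteq V \setminus F$, it is also a path in $G' \setminus F$, certifying $d_{G' \setminus F}(u,v) \leq k \cdot \ell(u,v)$. It therefore suffices to show that, whp, every such triple has at least one good round.

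The per-round success probability is straightforward: the events $\{F \subseteq S_i\}$ and $\{u,v \notin S_i\}$ are independent because $F$ is disjoint from $\{u,v\}$, so the joint probability equals $q^{|F|}(1-q)^2 \geq q^r (1-q)^2$. The choice $q = r/(r+2)$ maximizes this (up to constants), yielding a per-round success probability of $\Omega(1/r^2)$. Consequently, for $t = C r^3 \log n$ with $C$ sufficiently large, the probability that a fixed triple has no good round is at most $(1 - \Omega(1/r^2))^t \leq e^{-\Omega(r \log n)} = n^{-\Omega(r)}$, and a union bound over the at most $|E| \cdot \binom{n}{r} \leq n^{r+2}$ triples completes the argument.

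The main obstacle is precisely this last union bound: the exponential-in-$r$ number of potential fault sets inflates the required failure exponent, pushing the number of rounds from the $\Theta(r^2 \log n)$ that would suffice for any single triple up to $\Theta(r^3 \log n)$. A refinement that first conditions on $F$ and then unions over edges of $E_F$ encounters the same $\binom{n}{r}$ factor when unioning over $F$, so this cubic dependence appears intrinsic to the color-coding-style sampling approach.
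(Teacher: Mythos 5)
Your proposal is correct and takes essentially the same approach as the paper: sample a large ``virtual'' fault set in each of $\Theta(r^3\log n)$ independent rounds (you use inclusion probability $q=r/(r+2)$; the paper uses $p=1-1/r$), run the given $k$-spanner algorithm on the remaining induced subgraph, take the union, and argue correctness via a union bound over all $(F,u,v)$ triples. The only cosmetic difference is that the paper tunes the sampling probability so the surviving vertex count is at most $2n/r$ whp, giving $f(2n/r)$ directly, whereas your choice gives $f(O(n/r))$ and you implicitly need $f$ to have (at most) polynomial growth to absorb the constant into $f(2n/r)$; also, the paper restricts attention to edges $\{u,v\}$ that are shortest paths in $G\setminus F$ to equate $d_{G\setminus J}(u,v)$ with $\ell(u,v)$, while you argue more simply via $d_{G'\setminus F}(u,v)\le k\,\ell(u,v)$ for all $\{u,v\}\in E_F$ — both suffice.
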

\begin{proof}
Our algorithm is simple: in each iteration, we independently add each vertex to a set $J$ with probability $p = 1-1/r$, and then use the given algorithm to build a $k$-spanner on the remaining graph $G \setminus J$.  If $r=1$ then we can set $p = 1/2$, which will just affect the constants in the $O(\cdot)$.  We do this for $\alpha = \Theta(r^3 \log n)$ iterations, each independent of the others.  Let $H$ be the graph obtained by taking the union of the iterations.

We first bound the size of $H$.  Without loss of generality we can assume that $r \leq n^{2/3}$, since when $r > n^{2/3}$ the claimed size bound is larger than $n^2$ and thus trivially true.  In each iteration, the expected number of vertices in $G \setminus J$ is $n/r$. By a simple Chernoff bound, the probability that a given iteration has more than $2n / r$ vertices in $G \setminus J$ is at most $e^{-(1/3) n/r} \leq e^{-(1/3) n^{1/3}}$.  Since there are only $\alpha = O(r^3 \log n) \leq O(e^{3\ln n} \log n)$ iterations, we can take a union bound over the iterations and get that with high probability the number of vertices in $G \setminus J$ is at most $2n / r$ in \emph{every} iteration.  Thus the total size of $H$ is at most $O(\alpha \cdot f(\frac{2n}{r}))$.  Now we just need to prove that this algorithm results in a valid $r$-fault tolerant $k$-spanner for $\alpha = O(r^3 \log n)$.

For each $F \subseteq V$ with $|F| \leq r$, let $E'_F$ be the edges in $E_F$ for which the shortest path in $G \setminus F$ between the endpoints is just the edge.  More formally, $E_F = \{\{u,v\} \in E_F : d_{G \setminus F}(u,v) = \ell(\{u,v\})\}$.  It is easy to see that it is sufficient for there to be a path of length at most $k \cdot d_{G\setminus F}(u,v)$ between $u$ and $v$ in $G \setminus F$ for every $F \subseteq V$ with $|F| \leq r$ and $\{u,v\} \in E'_F$.  This is because for a given failure set $F$, if we distort the distances of all remaining edges that are actually part of shortest paths by at most $k$, then we distort the distances of all pairs by at most $k$ (since each edge on the shortest path is distorted by at most $k$).  So we consider a particular such $F$ and $\{u,v\}$ and upper bound the probability that there is no stretch-$k$ path between $u$ and $v$ in $G \setminus F$.

Suppose that in some iteration neither $u$ nor $v$ is in $J$, but all of $F$ is in $J$.  Then since $\{u,v\} \in E'_F$, the spanner that we build on $G \setminus J$ contains a path between $u$ and $v$ of length at most $k \cdot d_{G \setminus J}(u,v) = k \cdot \ell(\{u,e\}) = k \cdot d_{G \setminus F}(u,v)$.  Obviously this path also exists in $G \setminus F$, since $F \subseteq J$.  So if this happens then $H$ is valid for $\{u,v\}$ and $F$.  The probability that this happens in a particular iteration is clearly $(1-p)^2 \cdot p^r$, which is at least $1/(4r^2)$ as long as $r \geq 2$ (if $r=1$ then this probability it $1/8$, which does not significantly affect the results).  Thus the probability that this never happens in any iteration is at most $(1-\frac{1}{4r^2})^{\alpha} \leq e^{-\alpha / 4r^2}$, so if we set $\alpha = \Theta(r^3 \log n)$ this becomes less than $1/n^{C(r+2)}$ for arbitrarily large constant $C$.  Now taking a union bound over all $\{u,v\}$ and $F$ gives the theorem.
\end{proof}

\begin{corollary} \label{cor:absolute}
For every graph $G=(V,E)$ with nonnegative edge lengths $\ell : E \rightarrow \mathbb{R}_{\geq 0}$
and every odd $k\ge 1$, there is a polynomial time algorithm that with high probability constructs a $r$-vertex-tolerant $k$-spanner with at most $O(r^{2-\frac{2}{k+1}} n^{1+\frac{2}{k+1}}\log n)$ edges.
\end{corollary}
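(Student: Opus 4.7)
The plan is to apply the conversion in Theorem~\ref{thm:conversion} to a standard (non-fault-tolerant) $k$-spanner construction that achieves the optimal trade-off between stretch and size. Specifically, for every odd integer $k \geq 1$, the classical greedy spanner algorithm (see \cite{ADDJS93}) runs in polynomial time and, on any weighted graph, outputs a $k$-spanner of size
\[
  f(n) \;=\; O\!\left(n^{1+2/(k+1)}\right),
\]
obtained by writing $k = 2t-1$ with $t = (k+1)/2$ so that the standard bound $O(n^{1+1/t})$ becomes the expression above. This algorithm is deterministic, which matches the hypothesis of Theorem~\ref{thm:conversion}.

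Next, I would simply substitute this $f$ into the size bound of Theorem~\ref{thm:conversion}. The resulting $r$-fault tolerant $k$-spanner has, with high probability, size at most
\[
  O\!\left(r^3 \log n \cdot f(2n/r)\right)
  \;=\; O\!\left(r^3 \log n \cdot (2n/r)^{1+2/(k+1)}\right)
  \;=\; O\!\left(r^{\,3 - 1 - 2/(k+1)} \, n^{1+2/(k+1)} \log n\right),
\]
and the exponent on $r$ simplifies to $2 - \tfrac{2}{k+1}$, matching the statement. The constant $2^{1+2/(k+1)}$ that arises from $(2n/r)^{1+2/(k+1)}$ is absorbed into the $O(\cdot)$ notation, and the algorithm is polynomial-time because it consists of $O(r^3 \log n)$ independent invocations of the polynomial-time greedy spanner algorithm on vertex-induced subgraphs.

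There is no real obstacle here: the work is entirely packaged in Theorem~\ref{thm:conversion} and the classical size guarantee of the greedy spanner. The only thing worth double-checking is the arithmetic of the exponents and the requirement that $k$ be odd so that the exponent $2/(k+1)$ comes out as $1/t$ for an integer $t$, matching the regime in which the $O(n^{1+1/t})$ bound is known to hold.
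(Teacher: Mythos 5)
Your proposal is correct and takes essentially the same approach as the paper: both apply Theorem~\ref{thm:conversion} to the greedy $k$-spanner construction of Alth\"ofer et al.~\cite{ADDJS93} with size $O(n^{1+2/(k+1)})$, and the exponent arithmetic you carry out explicitly is exactly what the paper leaves implicit.
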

\begin{proof}
Alth\"ofer et al.~\cite{ADDJS93} showed that the simple greedy spanner construction has size at most $O(n^{1+\frac{2}{k+1}})$.  Applying Theorem~\ref{thm:conversion} to this construction completes the proof.
\end{proof}

Since Theorem~\ref{thm:conversion} applies to any $k$-spanner construction, we can apply it to \emph{distributed} spanner constructions.  We assume that every node knows $r$, the desired amount of fault tolerance.

\begin{theorem} \label{thm:abs_dist}
If there is a distributed algorithm $A$ that on every graph builds a $k$-spanner of size $f(n)$ in $t(n)$ rounds, then there is a distributed algorithm that on any graph builds with high probability an $r$-fault tolerant $k$-spanner of size $O(r^3 \log n \cdot f(2n/r))$ in $O(r^3 \log n \cdot t(n))$ rounds.
\end{theorem}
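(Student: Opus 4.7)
The plan is to implement the centralized construction from Theorem~\ref{thm:conversion} essentially verbatim in the $\mathcal{LOCAL}$ model, relying on the fact that both the sampling step and the invocation of the underlying spanner algorithm $A$ are inherently local.

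First, I would describe one iteration. Every node independently flips a biased coin and joins the sampled set $J$ with probability $p=1-1/r$ (using $p=1/2$ if $r=1$, as in Theorem~\ref{thm:conversion}). In one communication round each node announces its membership status to all its neighbors, so afterwards every non-sampled vertex knows precisely which of its incident edges belong to the induced subgraph $G \setminus J$. The sampled vertices stay silent for the remainder of the iteration, and the non-sampled vertices simulate $A$ on the subgraph $G \setminus J$. Since $A$ is a $\mathcal{LOCAL}$ algorithm that runs in $t(|V(G\setminus J)|) \le t(n)$ rounds on its input graph, this simulation takes at most $t(n)$ rounds; an edge is added to the output if and only if $A$ includes it. Each vertex retains the union of all edges it has ever included across iterations.

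Next, I would run $\alpha = \Theta(r^3 \log n)$ such iterations sequentially, each using fresh independent randomness at each node. A global round counter (or a single initial broadcast of $r$ and $n$, which we already assume is known) lets the iterations proceed in lockstep. The total number of rounds is $O(\alpha) \cdot (t(n)+O(1)) = O(r^3 \log n \cdot t(n))$, matching the bound claimed in the theorem.

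Finally, for correctness and size, I would observe that the distributed execution produces exactly the same random variable as the centralized algorithm in the proof of Theorem~\ref{thm:conversion}: the same coin flips determine $J$ in each iteration, and the same deterministic (or randomized) algorithm $A$ is invoked on the same subgraph $G\setminus J$. Therefore both the high-probability size bound $O(r^3 \log n \cdot f(2n/r))$ and the $r$-fault-tolerant $k$-spanner guarantee transfer directly, with no additional analysis needed. The only step that even requires thought is verifying that $A$ really can be simulated on the induced subgraph $G\setminus J$ in the $\mathcal{LOCAL}$ model, which holds because sampled vertices simply act as non-existent nodes (they neither send nor process messages for this subroutine), and this is decided after a single round of local communication. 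I do not anticipate a genuine obstacle; the mild care needed is to make sure the round count is charged per iteration rather than amortized, which is why the factor $r^3 \log n$ multiplies $t(n)$.
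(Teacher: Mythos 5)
Your proposal matches the paper's proof: run $\Theta(r^3\log n)$ iterations, in each one sample $J$ by having every vertex join independently with probability $1-1/r$, run $A$ on $G\setminus J$, take the union, and invoke the analysis of Theorem~\ref{thm:conversion} for the size and correctness bounds. The extra details you supply about simulating $A$ on the induced subgraph and charging rounds per iteration are fine (and in fact the paper's one-line proof says ``each edge'' where it clearly means ``each vertex,'' which you correctly interpret).
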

\begin{proof}
The algorithm is simple: $O(r^3 \log n)$ times, each edge independently decides whether or not to join $J$ with probability $1-1/r$, and then $A$ is run on the remainder.  This obviously takes at most $O(r^3 \log n \cdot t(n))$ rounds, and the analysis of Theorem~\ref{thm:conversion} proves the desired size bound.
\end{proof}

\begin{corollary} \label{cor:abs_dist}
There is a distributed algorithm in the $\mathcal{LOCAL}$ model that in $O(k r^3 \log n)$ rounds constructs with high probability an $r$-fault tolerant $k$-spanner with at most $O(k r^{2 - \frac{2}{k+1}} n^{1+\frac{2}{k+1}} \log n)$ edges.
\end{corollary}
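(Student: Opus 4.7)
The plan is to instantiate Theorem~\ref{thm:abs_dist} with a distributed $k$-spanner construction that matches the centralized size bound used in Corollary~\ref{cor:absolute}. Concretely, for odd $k$ there are standard distributed implementations of the Alth\"ofer et al.~\cite{ADDJS93}-style greedy construction that produce a $k$-spanner of size $f(n) = O(k\, n^{1+2/(k+1)})$ in $t(n) = O(k)$ rounds of the $\mathcal{LOCAL}$ model. This is the only ingredient we need to borrow from outside; everything else is supplied by the conversion in Theorem~\ref{thm:abs_dist}.

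Plugging these parameters into Theorem~\ref{thm:abs_dist} gives round complexity
\[
 O\bigl(r^3 \log n \cdot t(n)\bigr) \;=\; O(k\, r^3 \log n),
\]
and edge count
\[
 O\bigl(r^3 \log n \cdot f(2n/r)\bigr)
 \;=\; O\bigl(k\, r^3 \log n \cdot (2n/r)^{1+2/(k+1)}\bigr)
 \;=\; O\bigl(k\, r^{2-2/(k+1)}\, n^{1+2/(k+1)} \log n\bigr),
\]
exactly matching the bounds stated in the corollary.

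The only point requiring any attention is picking a distributed $k$-spanner whose round complexity and edge count are both at most linear in $k$, so that no additional $k$-factors creep in beyond the one already appearing in the claimed bound; distributed spanner constructions with precisely this tradeoff are well documented in the literature, so I do not anticipate any genuine technical obstacle. The corollary thus follows essentially by substitution into Theorem~\ref{thm:abs_dist}.
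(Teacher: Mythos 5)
Your approach is the same as the paper's: instantiate Theorem~\ref{thm:abs_dist} with a distributed $k$-spanner construction of size $O(k n^{1+2/(k+1)})$ running in $O(k)$ rounds, and the arithmetic works out. One small but worth-noting imprecision: you describe the needed subroutine as a ``distributed implementation of the Alth\"ofer et al.\ greedy construction,'' but the greedy algorithm is inherently sequential (it processes edges one at a time in sorted order) and is not known to be distributable in $O(k)$ rounds; the paper instead cites the construction of Derbel, Gavoille, Peleg, and Viennot~\cite{DGPV08}, which achieves the same size (up to the extra $k$ factor you already accounted for) via a genuinely different, locality-friendly technique. The conclusion is unaffected, but the cited ingredient should be a bona fide distributed construction such as~\cite{DGPV08}, not the greedy algorithm.
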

\begin{proof}
Apply Theorem~\ref{thm:abs_dist} to the distributed deterministic spanner construction of Derbel, Gavoille, Peleg, and Viennot~\cite{DGPV08}, which has size $O(k n^{1+\frac{2}{k+1}})$ and runs in $O(k)$ rounds.
\end{proof}

\section{Unit-Length $r$-Fault Tolerant $2$-Spanner}

We now move from general $k$ to the specific case of $k=2$.  It is easy to see (and has long been known) that no non-trivial absolute bounds on the size a $2$-spanner are possible, so following previous work, we instead consider the approximation version.  In this section we will mostly work in the directed setting in which every edge $e$ has an arbitrary cost $c_e\ge 0$.
This is obviously more general than the undirected, unit-cost setting considered in Section~\ref{sec:general}; we can work in this setting because of our additional assumptions that $k=2$ and edge lengths are unit.  Recent work of Dinitz and Krauthgamer~\cite{DK10} achieves approximation ratio $O(r \log n)$ for the unit-length $r$-fault tolerant $2$-spanner problem, and an $O(r \log \Delta)$ upper bound on the integrality gap (where $\Delta$ is the maximum degree).
Here we improve these results to $O(\log n)$ and $O(\log \Delta)$ (for all $r$)
via a different LP relaxation, and also provide a distributed implementation.

\subsection{The Previous LP Relaxation}

The relaxation in~\cite{DK10} uses, at a high level, a characterization of $r$-fault tolerant $2$-spanners
based on flows where ``for every set of $r$ faults, it is possible to send one unit of (integral) flow from $u$ to $v$ along paths of length at most $2$ for any edge $(u,v)$ still present in the graph once the faults have been removed''.  More formally, for each $(u,v) \in E$ let $\mathcal{P}_{u,v}$ denote the paths of length \emph{exactly} two from $u$ to $v$, so $\mathcal{P}_{u,v} \cup \aset{(u,v)}$ is the set of all paths of length at most $2$.  As in Section~\ref{sec:general}, for any possible fault set $F \subseteq V$ with $|F| \leq r$ let $E_F$ be the set of edges in $E$ with neither endpoint in $F$.  Let $\mathcal{P}^F_{u,v}$ be the subset of $\mathcal{P}_{u,v} \cup \aset{(u,v)}$ that still survives in $E_F$.
The integer program (IP) used by Dinitz and Krauthgamer~\cite{DK10} is presented
below as IP~\eqref{IP:old}.

\begin{equation} \label{IP:old}
\framebox{ $
\begin{array}{lll}
  \min & \displaystyle \sum_{e \in E} c_e x_e
  \\
  \mathrm{s.t.}
  & \displaystyle \sum_{P \in \mathcal P_{u,v}^F:\ e \in P} f_P^F \leq
  x_e & \forall F \subseteq V : |F| \leq r,\
  \forall (u,v) \in E_F,\
  \forall e \in E_F
  \\
  & \displaystyle \sum_{P \in \mathcal P_{u,v}^F} f_P^F \geq 1
  & \forall F \subseteq V : |F| \leq r,\ \forall (u,v)\in E_F
  \\
  & \displaystyle x_e \in \{0,1\} & \forall  e \in E
  \\
  & \displaystyle f_P^F \in \{0,1\} & \forall F \subseteq V : |F| \leq r, \ \forall (u,v) \in E_F, \ \forall P \in \mathcal P_{u,v}^F
\end{array}
$ }
\end{equation}

This formulation has capacity variables $x_e$ for every edge $e$, flow variables $f_P^F$ for every possible fault set $F$ and every path $P \in \mathcal{P}^F_{u,v} \cup (u,v)$ (for every $(u,v) \in E$), and constraints that require flows to obey the capacities and still send one unit of flow for every possible fault set.   Even though there are an exponential number of both constraints and variables, it is solvable in polynomial time~\cite{DK10}.

While IP~\eqref{IP:old} is the obvious integer programming formulation of the $r$-fault tolerant $k$-spanner problem, its straightforward relaxation to a linear program is not strong enough to give an approximation that is independent of $r$ (despite having an exponential number of both constraints and variables).  An easy way to see this is by considering the complete graph.  On the complete graph, every vertex obviously needs at least $r$ incoming and outgoing edges, or else it could be isolated with less than $r$ faults.  So on $K_n$ the optimum spanner has size at least $rn$.  On the other hand, when we relax the integrality constraints we can set the capacity of every edge to $1/ (n-r-2)$ and still have enough capacity to send one unit of flow from any vertex to any other even after $r$ of them have failed.  So the linear program has cost of only $n^2 / (n-r-2)$, which is $O(n)$ as long as $r < cn$ for some constant $c < 1$.  Thus the integrality gap of the relaxation is $\Omega(r)$ for an extremely wide range of $r$.

\subsection{A New LP Relaxation}

To get around this problem, we will use a different relaxation based on \emph{weighted} flow.  Before we give our formulation, we first prove a simple and useful characterization of $r$-fault-tolerant $2$-spanners:

\begin{lemma} \label{lem:FT_char}
For any (directed) graph $G = (V,E)$, a subgraph $H = (V, E')$ is an $r$-fault tolerant $2$-spanner if and only if for every $(u,v)$ in E either $(u,v) \in E'$ or there are at least $r+1$ paths of length $2$ from $u$ to $v$ in $E'$
\end{lemma}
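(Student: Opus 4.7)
The plan is to prove both directions by essentially the same hitting-set observation: a single fault can ``kill'' at most one length-$2$ path between a fixed pair $(u,v)$, because any such path has a unique intermediate vertex (distinct from $u$ and $v$).

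For the ``if'' direction, I would verify the fault-tolerant $2$-spanner condition directly. As noted in the introduction, it suffices to establish \eqref{eq:defn} only for pairs that are edges of the underlying graph, and this reduction carries over to the fault-tolerant setting applied to $G \setminus F$. So fix any $F \subseteq V$ with $|F| \leq r$ and any $(u,v) \in E_F$. If $(u,v) \in E'$, then since $u,v \notin F$ the edge survives in $H \setminus F$ and gives distance $1$. Otherwise the hypothesis supplies at least $r+1$ length-$2$ paths from $u$ to $v$ in $E'$, each with a distinct intermediate vertex; since $|F| \leq r$, at least one such path avoids $F$ entirely and certifies $d_{H\setminus F}(u,v) \leq 2$.

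For the ``only if'' direction, I would argue by contrapositive. Suppose some $(u,v) \in E$ satisfies $(u,v) \notin E'$ while there are at most $r$ length-$2$ paths from $u$ to $v$ in $E'$. Form $F$ by selecting one intermediate vertex from each such path, so $|F| \leq r$ and $u,v \notin F$ (intermediate vertices of length-$2$ paths are automatically distinct from the endpoints). Then $(u,v) \in E_F$ with $d_{G\setminus F}(u,v) = 1$, yet in $H \setminus F$ the edge $(u,v)$ is absent and every length-$2$ path from $u$ to $v$ has been broken by $F$, so $d_{H\setminus F}(u,v) \geq 3 > 2\cdot d_{G\setminus F}(u,v)$, violating the spanner condition.

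There is not really a hard step here; the only mild subtlety is confirming that the hitting set $F$ in the contrapositive is disjoint from $\{u,v\}$ so that $(u,v)$ remains in $E_F$, and that one intermediate vertex per path suffices to destroy it. Both are immediate from the fact that the paths in question have length exactly two. The rest is bookkeeping, and I would keep the write-up short accordingly.
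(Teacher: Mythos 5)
Your proof is correct and follows essentially the same argument as the paper: the ``if'' direction is verbatim the paper's pigeonhole argument on the $r+1$ distinct midpoints, and the ``only if'' direction is the paper's contradiction argument recast as a contrapositive (taking $F$ to be the midpoints of the surviving length-$2$ paths). If anything, your write-up is slightly more careful in noting that what is violated is $d_{H\setminus F}(u,v)\ge 3 > 2\cdot d_{G\setminus F}(u,v)$ rather than the nonexistence of any $u$--$v$ path.
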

\begin{proof}
Let $H$ be an $r$-fault tolerant $2$-spanner of $G$, and for the sake of contradiction assume that there is some $(u,v) \in E$ that is not in $E'$ and for which there are at most $r$ paths of length $2$ from $u$ to $v$.  Let $W \subseteq V$ be the vertices that are the midpoints of these paths.  Then if we let our fault set $F$ be $W$, in the remaining graph $H \setminus W$ there is no $u-v$ path, while in $G \setminus W$ the edge $(u,v)$ still exists.  Thus $H$ is not an $r$-fault tolerant $2$-spanner, giving the contradiction.

For the other direction, suppose that for every $(u,v) \in E$ either $(u,v) \in E'$ or there are at least $r+1$ paths of lengths $2$ from $u$ to $v$.  Let $F \subseteq V$ with $|F| \leq r$ be some fault set.  We need to show that $H$ is a valid $2$-spanner for $G \setminus F$, so let $(u,v) \in E$ with $u,v \not\in F$ be an arbitrary edge in $G \setminus F$.  If $(u,v) \in E'$ then obviously $H$ preserves its distance exactly, and  if $(u,v) \not\in E'$ then by assumption there are at least $r+1$ paths from $u$ to $v$ of length $2$ in $E'$.  At most $r$ of the intermediate vertices on those paths can be in $F$, so in $G \setminus F$ there is at least one such path remaining.
\end{proof}

With this lemma in hand, it is easy to see that the following integer program is an exact formulation of the $r$-fault tolerant $2$-spanner problem.

\begin{equation} \label{IP:spanner_simple}
\framebox{ $
\begin{array}{lll}
  \min & \displaystyle \sum_{e\in E} c_e x_e
  \\
  \mathrm{s.t.}
  & \displaystyle \sum_{P\in \mathcal P_{u,v}: e\in P} f_P \le x_e
  & \forall (u,v) \in E,\ \forall e\in E
  \\
  & \displaystyle (r+1) x_{(u,v)} + \sum_{P\in \mathcal P_{u,v}} f_P \ge r+1
  & \forall (u,v) \in E
  \\
  & \displaystyle x_e \in \{0,1\} & \forall e \in E
  \\
  & \displaystyle f_P \geq 0 & \forall (u,v) \in E,\ \forall P \in \mathcal P_{u,v}
\end{array}
$ }
\end{equation}

So now we have a different IP formulation than the one that was used in~\cite{DK10} to get an $O(r \log n)$-approximation.  Unfortunately, it is still not strong enough to yield an approximation ratio independent of $r$; there are still simple examples that give an integrality gap of $\Omega(r)$.  For example, consider a graph with nodes $u$ and $v$ and an edge of cost $M$ from $u$ to $v$ (for some arbitrarily large $M$), together with $r$ nodes $w_1, \dots, w_r$ and an edge of cost $1$ from $u$ to $w_i$ and from $w_i$ to $v$ for all $i \in [r]$.  The set of all $w_i$ nodes is a valid fault set, so the optimum spanner needs to include the $(u,v)$ edge in order to still be valid.  So the optimum spanner has cost at least $M$.  On the other hand, the LP can set $x_e$ to $1$ for all edges $e$ incident on some $w_i$, and set $x_{(u,v)} = 1/(r+1)$.  This has cost of only $M/(r+1) + 2r$.  By setting $M$ large enough, we get a gap of $\Omega(r)$.

We will strengthen the relaxation by adding a set of valid inequalities that are essentially the \emph{knapsack-cover inequalities} of Carr et al.~\cite{CFLP00} applied to this IP.  Let $(u,v) \in E$, and consider some arbitrary subset $W \subseteq \mathcal P_{u,v}$ with $|W| \leq r$.  If $x_{(u,v)} = 0$, then the covering inequality for $(u,v)$ implies that $\sum_{P \in \mathcal P_{u,v}} f_P \geq r+1$, and thus $\sum_{P \in \mathcal P_{u,v} \setminus W} f_P \geq r+1-|W|$.  On the other hand, if $x_{(u,v)} = 1$ then clearly $(r+1 - |W|) x_{(u,v)} \geq r+1-|W|$.  So for all $(u,v) \in E$ and all $W \subseteq \mathcal P_{u,v}$ with $|W| \leq r$, we can add the constraint that $(r+1-|W|) x_{(u,v)} + \sum_{P \in \mathcal P_{u,v} \setminus W} f_P \geq r+1-|W|$.  These are the knapsack-cover inequalities, and when we add them to our IP formulation and relax the integrality constraints we get the following LP relaxation:

\begin{equation} \label{LP:spanner}
\framebox{ $
\begin{array}{lll}
  \min & \displaystyle \sum_{e\in E} c_e x_e
  \\
  \mathrm{s.t.}
  & \displaystyle \sum_{P\in \mathcal P_{u,v}: e\in P} f_P \le x_e
  & \forall (u,v) \in E,\ \forall e\in E
  \\
  & \displaystyle (r+1 - |W|) x_{(u,v)} + \sum_{P\in \mathcal P_{u,v} \setminus W} f_P \ge r+1 - |W|
  & \forall (u,v) \in E, \ \forall W \subseteq \mathcal P_{u,v} : |W| \leq r
  \\
  & \displaystyle 0 \leq x_e \leq 1 & \forall e \in E
  \\
  & \displaystyle f_P \geq 0 & \forall (u,v) \in E,\ \forall P \in \mathcal P_{u,v}
\end{array}
$ }
\end{equation}

We refer to the first type of constraints as \emph{capacity} constraints, the second type as \emph{knapsack-cover} constraints (or inequalities), and the third as \emph{multiplicity} constraints.  This relaxation has a polynomial number of variables but a possibly exponential number of constraints, so we first need to show that we can solve it.  To do this we construct a separation oracle, which allows us to solve it in polynomial time by using the Ellipsoid algorithm.

\begin{lemma} \label{lem:solve}
There is a polynomial time algorithm that solves LP~\eqref{LP:spanner}.
\end{lemma}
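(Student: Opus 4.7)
The plan is to construct a polynomial-time separation oracle and invoke the Ellipsoid method. First observe that LP~\eqref{LP:spanner} has only polynomially many variables: there is one $x_e$ per edge and one $f_P$ per pair $(u,v)\in E$ and $P\in\mathcal{P}_{u,v}$, and since $|\mathcal{P}_{u,v}|\le n$ (one path per possible midpoint), this is at most $O(n^3)$ variables total. The capacity and multiplicity constraints are likewise polynomial in number and can be checked directly, so only the knapsack-cover constraints require a nontrivial oracle.

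Given a candidate solution $(x^*,f^*)$, I would handle each edge $(u,v)\in E$ separately. Write $\alpha\triangleq 1-x^*_{(u,v)}$ and, for a subset $W\subseteq\mathcal{P}_{u,v}$ with $|W|\le r$, define
\[
\Phi(W) \;\triangleq\; (r+1-|W|)\,x^*_{(u,v)} + \sum_{P\in \mathcal{P}_{u,v}\setminus W} f^*_P - (r+1-|W|).
\]
A direct rearrangement yields
\[
\Phi(W) \;=\; -(r+1)\alpha + \sum_{P\in\mathcal{P}_{u,v}} f^*_P \;+\; \sum_{P\in W}\bigl(\alpha - f^*_P\bigr),
\]
where the first two terms are independent of $W$. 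Thus minimizing $\Phi(W)$ over $|W|\le r$ reduces to the trivially solvable problem of minimizing $\sum_{P\in W}(\alpha-f^*_P)$ under the cardinality cap. The greedy optimum is obtained by sorting $\mathcal{P}_{u,v}$ in decreasing order of $f^*_P$ and including a path $P$ in $W$ if and only if $f^*_P>\alpha$ and $P$ is among the top $r$ such paths. This takes $O(n\log n)$ time per edge.

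Running this subroutine for every $(u,v)\in E$, the oracle either certifies that $\Phi(W)\ge 0$ for all feasible $W$ (in which case all knapsack-cover constraints hold) or returns a single violated $(u,v,W)$. Together with the polynomially-sized blocks of capacity and multiplicity constraints, this gives a polynomial-time separation oracle, and the Ellipsoid method then solves LP~\eqref{LP:spanner} in polynomial time.

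The main conceptual step is the algebraic rearrangement that decouples the objective across paths: once $\Phi(W)$ is expressed as a sum of per-path contributions $\alpha-f^*_P$, the cardinality-constrained minimization is linear and admits an immediate greedy solution. Everything else is bookkeeping, so I do not expect any serious obstacle beyond this observation.
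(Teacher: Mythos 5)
Your proposal is correct and takes essentially the same approach as the paper: construct a separation oracle for the exponentially many knapsack-cover constraints by observing that the worst-case $W$ for a given $(u,v)$ consists of the paths with the largest flow values, and then invoke Ellipsoid. The paper checks the top-$k$ prefix for each $k\in[0,r]$, whereas your algebraic rearrangement of $\Phi(W)$ into per-path contributions $\alpha-f^*_P$ directly pins down the single optimal $W$ (the top-$r$ paths with $f^*_P>\alpha$), which is a mild refinement of the same idea rather than a different route.
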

\begin{proof}
We want to construct a separation oracle.  Note that there are only a polynomial number of capacity constraints and multiplicity constraints, so we can check them all in polynomial time.  To find a violated knapsack-cover inequality, note that if there is some $(u,v) \in E$ and some $W \subseteq \mathcal P_{u,v}$ that violates the inequality, then the set $W'$ which consists of the $|W|$ paths in $\mathcal P_{u,v}$ with the largest $f_P$ value also violates the inequality.  So for every $(u,v) \in E$, for every $k \in [0,r]$, it suffices to check the constraint for $(u,v)$ and the $k$ paths in $\mathcal P_{u,v}$ with largest flow.  Since $r \leq n$, this takes only polynomial time.
\end{proof}

\subsection{$O(\log n)$-approximation} \label{sec:approx}

We now give the main result of this section.
\begin{theorem} \label{thm:approx}
There is a randomized $O(\log n)$-approximation for {\sc Minimum Cost $r$-Fault Tolerant $2$-Spanner} (for all $r$).
\end{theorem}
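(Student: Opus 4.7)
The plan is to first solve LP~\eqref{LP:spanner} via the separation oracle of Lemma~\ref{lem:solve}, obtaining a fractional solution $(x^*, f^*)$ of cost at most $\mathrm{OPT}$, and then to apply the vertex-randomized rounding scheme of~\cite{DK10}. Concretely, I would fix a scaling parameter $\alpha = c \log n$ for a sufficiently large constant $c$, sample one uniform random variable $Y_v \in [0,1]$ independently at each vertex $v$, and include an edge $e = (u,v)$ in the output spanner $H$ exactly when $\min(Y_u, Y_v) \le \alpha x_e^*$. The edge-inclusion probability is at most $2\alpha x_e^*$, so the expected cost of $H$ is at most $2\alpha \sum_e c_e x_e^* = O(\log n)\cdot \mathrm{OPT}$.

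For validity I would invoke Lemma~\ref{lem:FT_char}: it suffices to show that for every directed edge $(u,v) \in E$, with probability $1 - 1/\mathrm{poly}(n)$ either $(u,v)$ is in $H$ or at least $r+1$ of the length-two paths in $\mathcal P_{u,v}$ survive; a union bound over the $O(n^2)$ edges then finishes. Fix such an edge $(u,v)$. If $x_{(u,v)}^* \ge 1/\alpha$, then $\min(Y_u, Y_v) \le 1 \le \alpha x_{(u,v)}^*$ and $(u,v)$ is included deterministically. Otherwise define the set of ``heavy'' paths $W = \{P \in \mathcal P_{u,v} : f_P^* \ge 1/\alpha\}$. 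For any $P \in W$ the capacity constraints force $x_e^* \ge f_P^* \ge 1/\alpha$ on both edges of $P$, so both edges are always included and every path in $W$ survives; thus if $|W| \ge r+1$ we are done.

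The remaining case $|W| \le r$ is where the knapsack-cover inequalities earn their keep: applied to the chosen $W$, they yield
\[
  \sum_{P \in \mathcal P_{u,v} \setminus W} f_P^* \;\ge\; (r+1-|W|)\bigl(1 - x_{(u,v)}^*\bigr) \;\ge\; (r+1-|W|)/2 .
\]
For each $P = u \to w \to v$ outside $W$ (so $\alpha f_P^* < 1$) the event $Y_w \le \alpha f_P^*$ alone guarantees $\min(Y_u,Y_w) \le \alpha x_{(u,w)}^*$ and $\min(Y_w,Y_v) \le \alpha x_{(w,v)}^*$, so conditional on $(Y_u, Y_v)$ the path $P$ survives with probability at least $\alpha f_P^*$; moreover, as different paths in $\mathcal P_{u,v}$ use distinct midpoints, these survival events are independent given $(Y_u,Y_v)$. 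The conditional expected number of surviving paths in $\mathcal P_{u,v}\setminus W$ is thus at least $\alpha(r+1-|W|)/2 = \Omega(c\log n \cdot (r+1-|W|))$, which exceeds the required $r+1-|W|$ by an $\Omega(\log n)$ factor, so a standard Chernoff bound (applied conditionally and then averaged) yields the desired high-probability bound. The two places where I expect trouble are (i) arranging the rounding so that distinct length-two paths become conditionally independent: vertex-level sampling achieves this, whereas a naive per-edge rounding would square the path survival probability and forfeit any $r$-independent guarantee; and (ii) calibrating the threshold $1/\alpha$ in the definition of $W$ so that heavy paths are included for free while the residual flow promised by the knapsack-cover inequality is still large enough to drive the concentration step.
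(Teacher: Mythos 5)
Your proposal is correct and follows essentially the same route as the paper: solve LP~\eqref{LP:spanner} via Lemma~\ref{lem:solve}, apply the vertex-threshold rounding of Algorithm~\ref{alg:2spanner} with $\alpha=\Theta(\log n)$, and then use Lemma~\ref{lem:FT_char} plus a knapsack-cover inequality (applied to the set of heavy paths with $f_P \geq 1/\alpha$) to guarantee enough residual fractional flow for a Chernoff bound over the midpoint-threshold indicators. The only cosmetic difference is in how the case analysis is organized: you peel off the case $x_{(u,v)}\geq 1/\alpha$ at the start so that $1-x_{(u,v)}>1/2$ automatically in the remaining case, whereas the paper first branches on $|W|>r$ and then, inside the $|W|\leq r$ case, splits on whether $r'x_{(u,v)}\geq r'/2$; the two branchings are interchangeable. (One small stylistic note: the survival of a length-two path through midpoint $w$ depends only on $Y_w$ and not on $(Y_u,Y_v)$, so no conditioning is actually needed — the indicators are unconditionally independent across distinct midpoints. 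Also, the paper additionally applies Markov's inequality to the cost so that both the cost bound and feasibility hold simultaneously with constant probability; stating the cost bound only in expectation, as you do, is fine for a randomized approximation guarantee but worth noting.)
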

\begin{proof}
The first step of the algorithm is to solve LP~\eqref{LP:spanner} using Lemma~\ref{lem:solve}.
We then round the solution using Algorithm \ref{alg:2spanner} below.
(This rounding algorithm was designed in \cite{DK10}, but for a different
relaxation, hence they were forced to set $\alpha = \Theta(r \log n)$ and the analysis therein is not applicable here.)

\begin{algorithm}[H]
\caption{Rounding algorithm for $r$-fault tolerant $2$-spanner.}
\label{alg:2spanner}
Set $\alpha=C\ln n$ (for a large enough constant $C$).
\\
For every $v\in V$ choose independently a random threshold $T_v\in[0,1]$.
\\
Output $E'=\{(u,v)\in E:\ \minn{T_u,T_v} \leq \alpha\cdot x_{u,v}\}$.
\end{algorithm}

We first show that the cost of the solution is likely to be at most $6\alpha$
times the LP value.
The probability than some edge $e$ is selected to be in $E'$ is at most
$2\alpha x_e$, so the expected cost of the solution $E'$ is
$\sum_{e \in E} c_e \cdot 2\alpha x_e = 2\alpha \sum_e c_e x_e$.
By Markov's inequality, the cost of the solution $E'$ exceeds
$6\alpha \sum_e c_e x_e$ with probability at most $1/3$.

We now argue that this algorithm returns a valid $r$-fault tolerant $2$-spanner with high probability.  We say that $E'$ \emph{satisfies} an edge $(u,v)$ if either $(u,v) \in E'$ or $E'$ contains at least $r+1$ length $2$ paths from $u$ to $v$.  By Lemma~\ref{lem:FT_char}, if $E'$ satisfies all edges then it is a valid $r$-fault tolerant $2$-spanner.  Consider some edge $(u,v)\in E$.
Order the paths in $\mathcal P_{u,v}$ in nonincreasing order by their flow values in the LP solution, so $P_i$ is the path with the $i$th largest flow.  Let $W_i = \{P_1, P_2, \dots, P_i\}$, and let $i^* = \max\{i : f_{P_i} \geq 1/\alpha\}$.  If $i^* > r$ then $r+1$ paths have flow value at least $1/\alpha$, so both of the edges in each path have $x$ value at least $1/\alpha$, so they are included in $E'$ with probability $1$.  Thus $(u,v)$ is satisfied with probability $1$.

On the other hand, suppose that $i^* \leq r$.
Let us denote $r' = r+1 - i^* \ge 1$.
By the knapsack-cover constraint for $(u,v)$ and $W_{i^*}$, we know that
\begin{equation*}
r' x_{(u,v)} + \sum_{P \in \mathcal P_{u,v} \setminus W_{i^*}} f_P \geq r'
\end{equation*}

If $r' x_{(u,v)} \geq r' / 2$ then $x_{(u,v)} \geq 1/2$ and thus $(u,v)$ is included in $E'$ with probability $1$, satisfying $(u,v)$.  Otherwise it must be the case that $\sum_{P \in \mathcal P_{u,v} \setminus W_{i^*}} f_P \geq r'/2$.
For $P \in \mathcal P_{u,v}$, let $I_P$ be an indicator for the event that the
$T$ value of the middle vertex is at most $\alpha$ times the flow value $f_P$
(formally, if $P=(u,z,v)$ then $I_P=1_{T_z \leq \alpha f_P}$),
and observe that this event implies that both edges of $P$ are included in $E'$
(because then we have $T_z\leq \minn{x_{(u,z)},x_{(z,v)}}$).
Note that for $P \in W_{i^*}$, we have $I_P=1$ with probability $1$.
For $P \in \mathcal P_{u,v} \setminus W_{i^*}$,
we have $I_P=1$ with probability at least $\alpha f_P\in[0,1]$.
The number of paths from $\mathcal P_{u,v} \setminus W_{i^*}$ included in $E'$
is clearly at least $\sum_{P\in \mathcal P_{u,v} \setminus W_{i^*}} I_P$,
and we can bound that last quantity (which is a sum of independent indicators)
by a Chernoff bound (see e.g. \cite{MR95,DP09}).  Its expectation is
\[
  \EX \Big[\sum_{P\in \mathcal P_{u,v} \setminus W_{i^*}} I_P\Big]
  \geq \sum_{P \in \mathcal P_{u,v} \setminus W_{i^*}} \alpha f_P
  \geq \alpha r' / 2,
\]
so by our choice of $\alpha=C \log n$ for a large enough $C$,
\begin{equation}\label{eq:applyChernoff}
  \Pr\Big[\sum_{P\in \mathcal P_{u,v} \setminus W_{i^*}} I_P \leq \alpha r'/4\Big]
  \leq e^{-\Omega(\alpha r')}
  \leq 1/n^{\Omega(C)}
  \leq 1/n^3.
\end{equation}
Thus, with high probability the total number of length $2$ paths between $u$ and $v$ included in $E'$ is at least $i^* + \alpha r' / 4 \geq r+1$, and thus $(u,v)$ is satisfied.
The theorem follows by taking a union bound over these events
for all edges $(u,v)$,
and the aforementioned event that the solution's cost exceeds
$6\alpha$ times the LP value.
\end{proof}

\subsection{Bounded-Degree Graphs}

When the maximum degree of the graph is bounded by $\Delta$
and the edge costs $c_e$ are all $1$,
we can improve Theorem~\ref{thm:approx} slightly and give an $O(\log \Delta)$-approximation.  We simply change the inflation parameter $\alpha$ in Algorithm~$\ref{alg:2spanner}$ to be $O(\log \Delta)$ instead of $O(\log n)$.  We then need a more careful analysis, using an algorithmic version of the \Lovasz Local Lemma.  

\begin{theorem} \label{thm:bounded_degree}
There is a (randomized) $O(\log \Delta)$-approximation for the (directed) $r$-fault tolerant $2$-spanner problem on graphs in which $c_e = 1$ for all $e \in E$ and the maximum (in and out) degree is at most $\Delta \geq 2$.
\end{theorem}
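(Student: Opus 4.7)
The plan is to rerun Algorithm~\ref{alg:2spanner} with the smaller inflation $\alpha=C\log\Delta$ (for a sufficiently large constant $C$), and to replace the edge-level union bound used in the proof of Theorem~\ref{thm:approx} by an application of the \Lovasz Local Lemma. For each edge $(u,v)\in E$, I would define the bad event $A_{(u,v)}$ that $E'$ fails to satisfy $(u,v)$ in the sense of Lemma~\ref{lem:FT_char}; by that lemma, avoiding all such events is exactly the condition that $E'$ is a valid $r$-fault tolerant $2$-spanner.

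The key observation is locality: whether $A_{(u,v)}$ holds depends only on the thresholds $T_w$ for $w\in S_{(u,v)}:=\{u,v\}\cup\{z : (u,z),(z,v)\in E\}$, a set of size at most $\Delta+2$. Two bad events are mutually independent whenever their $S$-sets are disjoint, so I would bound the dependency degree by counting, for each $w\in S_{(u,v)}$, the edges $(u',v')$ whose $S$-set contains $w$: either $w\in\{u',v'\}$ (at most $2\Delta$ edges incident to $w$) or $w$ is a middle vertex of a length-$2$ path $u'\to w\to v'$ (at most $|N^{-}(w)|\cdot|N^{+}(w)|\leq \Delta^2$ edges between an in-neighbor and an out-neighbor of $w$). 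Summing over the $O(\Delta)$ vertices in $S_{(u,v)}$ yields a dependency degree of $d=O(\Delta^3)$.

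Repeating the Chernoff computation of~\eqref{eq:applyChernoff} verbatim but with $\alpha=C\log\Delta$ gives $\Pr[A_{(u,v)}]\leq \Delta^{-\Omega(C)}$, so choosing $C$ large enough satisfies the symmetric LLL criterion $e\cdot\Delta^{-\Omega(C)}\cdot(d+1)\leq 1$; this already proves existence of a valid spanner, and the algorithmic (Moser--Tardos) variant produces one in expected polynomial time. Since $\mathbb E[|E'|]\leq 2\alpha\sum_e x_e=O(\log\Delta)\cdot\LP\leq O(\log\Delta)\cdot\mathrm{OPT}$, it only remains to argue that the cost does not blow up once we condition on all $A_{(u,v)}$ failing. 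This is where I expect the main difficulty, since the basic LLL supplies only a positive (possibly tiny) lower bound on the success probability, which is too weak to combine with a naive Markov-style cost bound. The natural route is the distribution-preservation analysis of the Moser--Tardos output: the event $\{e\in E'\}$ is determined by $T_u$ and $T_v$ alone, and these two thresholds are touched by at most $O(\Delta^2)$ bad events, so the inclusion probability at termination exceeds the original $2\alpha x_e$ by only a $1+o(1)$ factor, giving the claimed $O(\log\Delta)$-approximation.
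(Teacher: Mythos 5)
Your proposal is correct, and you correctly identified the crux of the proof: the basic LLL only guarantees a valid spanner exists, but conditioning on the bad events failing could in principle inflate the cost, so a naive Markov bound breaks. The first half of your argument (events $A_{(u,v)}$, the Chernoff estimate with $\alpha = C\log\Delta$ yielding $\Pr[A_{(u,v)}]\leq\Delta^{-\Omega(C)}$, and the $d=O(\Delta^3)$ dependency count) matches the paper essentially verbatim. Where you diverge is the fix for the cost issue. The paper handles it internally to the LLL: it introduces a second family of ``local cost'' events $B_u$, one per vertex, declaring $B_u$ to be the event that the number of edges incident to $u$ charged to their other endpoint exceeds $4\alpha$ times the corresponding local LP mass. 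Each $B_u$ has probability $\Delta^{-\Omega(C)}$ by Chernoff and depends only on $T_z$ for $z\in N^+(u)\cup N^-(u)$, so adding them to the event family keeps the dependency degree at $O(\Delta^3)$. Avoiding all $A_{(u,v)}$ and all $B_u$ simultaneously then gives a \emph{deterministic} cost bound $|E'|\leq 8\alpha\sum_e x_e$, using only the basic Moser--Tardos lemma (Lemma~\ref{lem:LLL}) as stated. Your route instead invokes the distribution-preservation property of the Moser--Tardos output (the result of Haeupler, Saha, and Srinivasan): for an event determined by a variable set $S$, its probability under the algorithm's output distribution is at most its original probability times $\prod_{A:\,\mathrm{vbl}(A)\cap S\neq\emptyset}(1-x(A))^{-1}$. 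With $x(A)=1/(d+1)=\Theta(\Delta^{-3})$ and $|\Gamma(\{e\in E'\})|=O(\Delta^2)$, the inflation factor is $\exp(O(1/\Delta))=O(1)$, so the expected cost of the output stays $O(\log\Delta)\cdot\LP$ and a Markov step finishes. Both approaches are sound. The paper's is self-contained given the cited Moser--Tardos lemma and gives a pointwise rather than expected cost guarantee; yours avoids inventing auxiliary events but requires a strictly stronger black box than the one the paper cites, and for a fully rigorous write-up you should cite the distribution-preservation theorem explicitly rather than Lemma~\ref{lem:LLL}. One small caveat: the $(1+o(1))$ you claim holds as $\Delta\to\infty$; for bounded $\Delta\geq 2$ it is merely $O(1)$, which is all you actually need.
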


We shall use the following constructive version of the symmetric \Lovasz Local Lemma, which is an immediate corollary of the nonsymmetric version proved by Moser and Tardos~\cite{MT10}.

\begin{lemma}[Moser and Tardos~\cite{MT10}] \label{lem:LLL}
Let $\mathcal P$ be a finite set of mutually independent random variables in a probability space.  Let $\mathcal A$ be a finite set of events determined by the variables in $\mathcal P$.  Suppose that each $A \in \mathcal A$ is mutually independent of all but at most $d$ other events in $\mathcal A$, and suppose that $\Pr[A] \leq p$ for all $A \in \mathcal A$.  If $ep(d+1) \leq 1$ then there exists an assignment of values to the variables $\mathcal P$ such that no event $A \in \mathcal A$ occurs.  Moreover, there is a randomized algorithm that finds such an assignment in expected time $O(|\mathcal P| + |\mathcal A| \cdot |\mathcal P| / d)$.
\end{lemma}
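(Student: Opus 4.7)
The plan is to follow Moser and Tardos's entropy-compression/witness-tree argument. The algorithm is the natural resampling procedure: sample each variable in $\mathcal P$ from its distribution, then while some $A\in \mathcal A$ is true under the current assignment, pick any such $A$ (by a fixed tie-breaking rule) and resample the variables determining $A$; output the final assignment. If the algorithm terminates it clearly yields an assignment on which no event in $\mathcal A$ holds, so the existence statement reduces to showing termination with positive probability, and the algorithmic bound reduces to bounding the expected number of resamplings.

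The heart of the proof is bounding this expected number. Record the log $C=(A_{i_1},A_{i_2},\ldots)$ of resampled events in order. For each step $t$ build a \emph{witness tree} $\tau_C(t)$ rooted at $A_{i_t}$ by processing the earlier log entries in reverse: insert each $A_{i_j}$ as a child of the deepest existing node whose label shares a determining variable with $A_{i_j}$, and skip $A_{i_j}$ if no such node exists. Two properties are the crux: (i) distinct steps of the log produce distinct witness trees, because the root together with the tree structure can be inverted to identify the step; and (ii) for any fixed finite tree $\tau$ labeled by events, the probability that $\tau$ appears as some $\tau_C(t)$ is at most $\prod_{v\in\tau}\Pr[A_v]\le p^{|\tau|}$. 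The latter follows by viewing resamplings as reading from a ``virtual'' tape of i.i.d.\ draws: siblings in the tree correspond to events with pairwise-disjoint variable supports (by construction of the tree), hence they consume independent portions of the tape, and the check that a node's event holds is a fresh event on those draws.

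For counting, note that any child's label must share a variable with, hence be dependent on, its parent's label; so the children of a node labeled $A$ use labels from the at most $d$ neighbors of $A$ in the dependency graph (plus $A$ itself). A Galton--Watson branching-process bound then majorizes $\sum_{\tau \text{ rooted at } A}\prod_{v\in\tau}\Pr[A_v]$ by a geometric-type series that converges under $ep(d+1)\le 1$, yielding $O(1/d)$ expected resamplings per event and thus $O(|\mathcal A|/d)$ in total. Since each resampling step is executed in $O(|\mathcal P|)$ time by maintaining variable-to-event incidence lists plus a queue of currently violated events, and initialization costs $O(|\mathcal P|)$, the total expected runtime is $O(|\mathcal P|+|\mathcal A|\cdot|\mathcal P|/d)$ as claimed.

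The main obstacle is the witness-tree counting step: producing the generating-function / branching-process inequality that converts the hypothesis $ep(d+1)\le 1$ into a convergent bound on the total tree mass. A cleaner shortcut is to invoke Moser--Tardos's general \emph{nonsymmetric} algorithmic LLL as a black box and instantiate the weights $x_A=1/(d+1)$ uniformly; then the symmetric hypothesis $ep(d+1)\le 1$ immediately verifies the nonsymmetric hypothesis $\Pr[A]\le x_A\prod_{B\sim A}(1-x_B)$, and the general theorem delivers the expected bound $\sum_A x_A/(1-x_A)=O(|\mathcal A|/d)$ on resamplings with no further combinatorics.
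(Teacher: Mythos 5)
Your proposal is correct, and the shortcut in your final paragraph is exactly the paper's approach: the paper states this lemma as ``an immediate corollary of the nonsymmetric version proved by Moser and Tardos,'' which is precisely the instantiation $x_A = 1/(d+1)$ (noting $(1-\tfrac{1}{d+1})^d \geq 1/e$ so that $ep(d+1)\le 1$ gives $\Pr[A]\le x_A\prod_{B\sim A}(1-x_B)$, and $\sum_A x_A/(1-x_A)=|\mathcal A|/d$). Your first two paragraphs go further than the paper by reconstructing the full witness-tree argument from scratch; this is a correct sketch of Moser--Tardos, with the standard (slightly compressed) coupling and Galton--Watson steps.
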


\begin{proof}[Proof of Theorem~\ref{thm:bounded_degree}]
Consider a directed graph $G$
with unit edges costs $c_e=1$ and vertex degrees bounded by $\Delta$.
Consider a solution to the LP relaxation~\eqref{LP:spanner},
and apply Algorithm~\ref{alg:2spanner} to it but with inflation factor $\alpha = C\log \Delta$ instead of $C \log n$.

For an edge $(u,v) \in E$, let $A_{u,v}$ be the event that $E'$ does not
satisfy this edge, i.e. $(u,v) \not\in E'$ and the graph $G'=(V,E')$
has less than $r+1$ paths of length $2$ from $u$ to $v$.
The analysis of Theorem~\ref{thm:approx} shows
(after modifying~\eqref{eq:applyChernoff} with our new value of $\alpha$), that
\[
  \Pr[A_{u,v}] \leq e^{-\Omega(\alpha)} \leq 1/\Delta^{\Omega(C)}.
\]
Furthermore, note that $A_{u,v}$ depends only on the random variables $T_z$
for $z \in (N^+(u) \cap N^-(v)) \cup \{u\}$.
Here and throughout, $N^+(u)$ and $N^-(u)$ denote the out-neighbors
and in-neighbors of $u\in V$, respectively.
Observe that $A_{u,v}$ is independent of all but $\Delta^3$ other events
$A_{u', v'}$, simply because
there are at most $\Delta$ choices for each of $z$, $u'$, and $v'$.

We could now apply Lemma~\ref{lem:LLL} to these events.  The underlying mutually independent random variables $\mathcal P$ would be the $T_u$ variables, and the ``bad events" $\mathcal A$ would be the events $A_{u,v}$.  This would give us an algorithm that in polynomial time returned a valid $r$-fault tolerant $k$-spanner, but we also need a bound on the cost of this spanner.  The analysis via Markov's inequality in Theorem~\ref{thm:approx} is too weak now,
because when we apply the algorithm of Lemma~\ref{lem:LLL} we change the overall distribution in a way that might destroy the cost bound.  We need to integrate the cost analysis into the events that Lemma~\ref{lem:LLL} is applied to, so at a high level we employ a more local approach where the cost of $E'$ is split among the vertices and events bounding the cost are added to the $A_{u,v}$ events.
More specifically, we shall create many events, each of which controls
how the cost of $E'$ compares \emph{locally} with the cost of the LP,
and then apply the Local Lemma to the new events together with the $\{A_{u,v}\}$ events.
A formal argument follows.

For each vertex $u \in V$, let the random variable $Z_u^+$ be the
number of outgoing edges $(u,v)$ for which $T_v \leq \alpha \cdot x_{u,v}$,
and let $Z_u^-$ be the number of incoming edges $(v,u)$ for which
$T_v \leq \alpha \cdot x_{u,v}$.
Informally, $Z_u^+ + Z_u^-$ is the number of edges incident to $u$
whose inclusion in $E'$ can be charged to their other endpoint.
The algorithm's cost is $|E'| \leq \sum_{u \in V} (Z_u^+ + Z_u^-)$,
since every edge $(u,v)$ included in $E'$ adds $1$ to
either $Z_u^+$ or $Z_v^-$ (or both).

For each vertex $v\in V$, let $B_u$ be the event that
$Z_u^+ + Z_u^- > 4 \alpha (\sum_{(u,v) \in E} x_{u,v} + \sum_{(v,u) \in E} x_{v,u})$.
We would like to show that this event happens only with small probability.
Note that
$\EX[Z_u^+]
  = \sum_{(u,v) \in E} \min\{\alpha x_{u,v}, 1\}
  \leq \alpha \sum_{(u,v) \in E} x_{u,v}$,
so by a Chernoff bound (see e.g. \cite{MR95,DP09}) we get
\begin{equation*}
\Pr\Big[Z_u^+ > 2\alpha \sum_{(u,v) \in E} x_{u,v}\Big]
  \leq e^{-(1/3)(C \ln \Delta) \sum_{(u,v) \in E} x_{u,v}} \leq \Delta^{-C/3},
\end{equation*}
where in the final inequality we assume there is at least one outgoing
edge from $u$ and thus $\sum_{(u,v) \in E} x_{u,v} \geq 1$
(since otherwise $Z_u^+=0$ with probability $1$).
Using a similar argument to bound $Z_u^-$, we get
\begin{equation*}
  \Pr[B_u]
  \leq \Pr\Big[Z_u^+ > 2\alpha \sum_{(u,v) \in E} x_{u,v}\Big]
    + \Pr\Big[Z_u^- > 2\alpha \sum_{(v,u) \in E} x_{v,u}\Big]
  \leq 2\Delta^{-C/3}.
\end{equation*}

We now apply Lemma~\ref{lem:LLL} to the events and $A_{u,v}$ and $B_u$.
Note that $B_u$ depends only on the random variables $T_z$ for
$z \in N^+(u) \cup N^-(u)$,
and recall that $A_{u,v}$ depends only on $T_z$ for $z\in N^+(u)\cap N^-(v)$.  
Thus each event is mutually independent of all but $O(\Delta^3)$ other events ---
for an event $A_{u,v}$ we exclude at most $\Delta^3$ events $A_{u', v'}$
and at most $2\Delta^2$ events $B_{u'}$;
for an event $B_u$ we exclude at most $4\Delta^2$ events $B_{u'}$ and
at most $2\Delta^3$ events $A_{u',v'}$.
We can thus apply the Lemma~\ref{lem:LLL} with dependency parameter $d = O(\Delta^3)$,
because by setting sufficiently large $C$,
the probability of each event is at most a suitable $p=\Delta^{-\Omega(C)}<1/e(d+1)$.
Since the number of events is at most $O(n^2)$ and the number of underlying variables is only $n$, we conclude that there is a polynomial time algorithm to find the underlying variables $T_u$ so that
none of the events $A_{u,v}$ and $B_u$ occur.  This implies that $G'=(V,E')$ is an $r$-fault tolerant $2$-spanner of $G$ of cost
\begin{equation*}
  |E'|
  \leq \sum_{u \in V} \left(Z_u^+ + Z_u^-\right)
  \leq 8\alpha \sum_{(u,v)\in E} c_{u,v} x_{u,v}
  \leq O(\log \Delta) \cdot \mathrm{ LP},
\end{equation*}
which proves Theorem~\ref{thm:bounded_degree}.
\end{proof}

\subsection{Distributed Construction}

We now show how to adapt and use the $O(\log n)$-approximation that we designed in Section~\ref{sec:approx} to give a distributed $O(\log n)$-approximation.  We will assume that communication along an edge is bidirectional, even if the graph is directed.  The main problem that we run into when trying to design a distributed algorithm based on Algorithm~\ref{alg:2spanner} is solving the linear program.  If we had a solution, and every vertex knew the $x_e$ value of its incident edges, then we would be done; the rounding scheme in Algorithm~\ref{alg:2spanner} is entirely local, so every vertex $v\in V$ would just locally pick its threshold $T_v$ and include the appropriate edges.  If we want both endpoints of an edge to know that it has been included in the spanner, we can then just have every vertex tell all of its neighbors (in a single round) which edges it bought based on its threshold.

In order to (approximately) solve the LP we partition the graph into clusters, solve the LP separately on each cluster, and then repeat this process several times, eventually taking the average.  This technique is quite similar to the work of Kuhn, Moscibroda, and Wattenhofer~\cite{KMW06}, who showed how to approximately solve \emph{positive} LPs using the graph decompositions of Linial and Saks~\cite{LS93}.

The fundamental tool that we will use is the ability to quickly compute
a good \emph{padded decomposition},
which is a basic tool in metric embeddings, but has found numerous applications
in approximation and online algorithms (e.g. for network design problems).
This notion is essentially a version of low-diameter decompositions,
such as a sparse covers \cite{AP90}.
This specific version was (probably) introduced by Rao \cite{Rao99},
who observed it is can be derived from an earlier construction
of Klein, Plotkin and Rao \cite{KPR93}.
An explicit formulation of padded decompositions appeared only later,
in \cite{KL03,GKL03}, and used a construction of Bartal \cite{Bartal96}.
The definition given below is actually a special case of the usual notion,
where the so-called padding requirement is a unit radius around each vertex,
i.e. just the vertex's neighborhood.

Let $\mathcal T = \mathcal T(V)$ denote the set of all partitions of $V$
(irrespective of the graph structure).  For a partition $P\in \mathcal T$, we call each set $C\in P$ a \emph{cluster}.  Let $G'$ be the \emph{undirected} graph corresponding to $G$, and define the diameter of $C$ to be $\diam(C) = \max_{u,v \in V} d_{G'}(u,v)$
(this is usually called weak diameter, because it corresponds to the
shortest $u-v$ path in $G'$, possibly going out of $C$ along the way).
Finally, for $x\in V$ and a partition $P\in \mathcal T$,
we let $P(x)$ denote the cluster of $P$ that contains $x$.

\begin{definition} \label{def:padded}
A \emph{padded decomposition} of $G$ is a probability measure $\mu$
on $\mathcal T$ that satisfies the following two conditions:
\begin{enumerate} \compactify
\item For every $P\in \supp(\mu)$ and every $C \in P$
we have $\diam(C) \leq  O(\log n)$.
\item For every $x\in V$ we have
$\Pr_{P \sim \mu}[N(x) \subseteq P(x)] \geq 1/2$.
\end{enumerate}
\end{definition}

It is known that every metric space admits such a padded decomposition,
and there are polynomial-time randomized algorithms to sample
from such a decomposition~\cite{Bartal96,FRT04}.
It is convenient to assign to each cluster a vertex,
called the \emph{cluster center}.
One could always choose an arbitrary vertex in the cluster
(e.g. one whose identifier is the smallest).
The next lemma is a straightforward adaptation of the construction of
Bartal~\cite{Bartal96} to the distributed context;
it can also be viewed as a slight modification to the graph decompositions
of Linial and Saks~\cite{LS93}.

\begin{lemma} \label{lem:distributed_padded}
There is an algorithm in the $\mathcal{LOCAL}$ model that runs in $O(\log n)$ rounds and \whp\ samples from a padded decomposition, so that every vertex knows the cluster containing it, meaning all other vertices in the same cluster.  Every cluster $C$ also has a cluster center $v \in V$ (which is not necessarily in the cluster) with the property that $\diam(C\cup\{v\}) \leq O(\log n)$.
\end{lemma}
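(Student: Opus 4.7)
The plan is to implement a Bartal-style padded decomposition distributively, with the outer control flow in the spirit of Linial--Saks. Each vertex $v \in V$ independently draws two values: a random radius $R_v$ from a truncated exponential distribution with rate $\Theta(1)$ truncated at a threshold $R^\star = \Theta(\log n)$, and a uniformly random priority $\pi_v \in [0,1]$ (with ties broken by vertex identifier). Every vertex $u$ is then assigned to the cluster whose center $c(u)$ is the vertex $v$ minimizing $\pi_v$ among $\{v' : d_{G'}(u,v') \leq R_{v'}\}$. The partition $P$ has one cluster per distinct $c(u)$, and its center is precisely that vertex (which may or may not itself lie in the cluster).

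To execute this in the $\mathcal{LOCAL}$ model I would run two phases, each taking $O(\log n)$ rounds. In \emph{Phase~1}, every $v$ broadcasts the triple $(v,R_v,\pi_v)$ to all vertices within hop-distance $\lceil R_v \rceil$ by synchronized flooding; since $R_v \leq R^\star = O(\log n)$ for every vertex with high probability, all floods terminate within $O(\log n)$ rounds, and each $u$ can locally determine $c(u)$ from the triples it has collected. In \emph{Phase~2}, each center $v$ initiates another bounded flood whose members forward the message and reply with their identifiers; because the weak diameter of $C \cup \{v\}$ is at most $2R_v = O(\log n)$, Phase~2 also completes within $O(\log n)$ rounds, and every vertex ends up knowing the entire membership of its cluster.

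The two required properties follow directly. The diameter bound is immediate: every vertex in the cluster centered at $v$ satisfies $d_{G'}(\cdot,v) \leq R_v \leq R^\star$, so $\diam(C \cup \{v\}) \leq 2R_v = O(\log n)$. The padding bound $\Pr_{P \sim \mu}[N(x) \subseteq P(x)] \geq 1/2$ is the content of the classical analysis of exponential padded decompositions due to Bartal (and later refined by CKR): enumerating the candidate centers in increasing $\pi$-order, the first center whose ball captures $x$ also captures its entire neighborhood $N(x)$ with probability at least a constant determined by the exponential rate, which can be tuned to be at least $1/2$. The main obstacle is that truncating the radius at $R^\star$ could in principle spoil the padding estimate; I would handle this by choosing $R^\star$ large enough that the probability of any given radius being truncated is at most $n^{-c}$, and then conditioning (via a union bound over the $n$ vertices) on the event that no radius is truncated. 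The classical padding calculation then applies unchanged, and the vanishing-probability conditioning event is absorbed into the ``with high probability'' qualifier of the lemma.
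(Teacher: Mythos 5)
Your construction is essentially the paper's: both run a Bartal-style decomposition distributively by having every vertex locally sample a geometrically/exponentially distributed radius truncated at $O(\log n)$, flood its ball, and then assign each vertex to the ``highest-priority'' covering center, with the padding bound and cluster-membership collection following from the same $O(\log n)$-diameter argument. The only cosmetic difference is that you break the assignment tie with fresh random priorities $\pi_v$, whereas the paper simply uses vertex identifiers (a fixed lexicographic order), which suffices since Bartal's padding analysis works for any fixed ordering of potential centers.
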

\begin{proof}
The construction of Bartal~\cite{Bartal96} is simple,
and is usually described iteratively.
(As mentioned above, the padding property is not formally proved there,
but it can be derived from the analysis therein, see also \cite{KL03,GKL03}).
Working in the metric completion of $G$ (so removing vertices does not change distances), repeat the following procedure until every vertex has been assigned to some cluster:
Pick an arbitrary vertex $u$ from those that have not yet been assigned a cluster.  Randomly pick a radius $r_u$ from the geometric distribution with some constant parameter $p>0$.  Create a new cluster consisting of $u$ and all unclustered vertices that are within distance $r_u$ of $u$.

While this procedure is phrased iteratively, it quite obviously can be made distributed with only minor changes.  First, every vertex $u \in V$ locally chooses a value $r_u$ from the geometric distribution with parameter $p$.  Then every node $u$ simultaneously sends a message containing the ID of $u$ to all nodes within distance $\min\{r_u, O(\log n)\}$ of $u$.  Note that this take only $O(\log n)$ rounds, and with high probability  $\max_u \{r_u\} \leq O(\log n)$ (in fact we could truncate the exponential distribution so that $r_u$ is always less than $O(\log n)$, since the analysis of~\cite{KL03} shows that this does not significantly affect the padding probability).  Now every node chooses as a cluster center the sender with the smallest ID (i.e.~the sender that comes earliest in the lexicographic ordering) of the vertices whose messages it received.  The only difference between the output of this algorithm and Bartal's algorithm is that in Bartal only unclustered nodes can be the center of a new cluster, while in our variation every vertex (in lexicographic order) gets the chance to create a cluster (which it might not be a member of itself).  It is well known (see e.g.~\cite{KL03,GKL03}) that this change does not affect anything in the analysis.

We remark that the construction above has a natural choice of cluster centers.
Under this choice, a cluster $C$ might not contain its center $v\in V$,
but $\diam(C\cup\{v\}) \leq O(\log n)$, which is sufficient for our purposes.
\end{proof}

Now that we can construct padded decompositions,
we want to use them to decompose LP~\eqref{LP:spanner} into ``local'' parts.
Let $P$ be a partition sampled from $\mu$. For each cluster $C \in P$, let $N(C)$ denote the set of vertices in $V \setminus C$ that are adjacent to at least one vertex in $C$, let $\delta(C) \subseteq E$ be all edges with one endpoint in $C$ and one endpoint not in $C$, and let $E(C) \subseteq E$ be the set of edges with both endpoints in $C$.  Let $G(C)$ be the subgraph of $G$ induced by $C \cup N(C)$. We define $\LP(C)$ to be LP~\eqref{LP:spanner} for $G(C)$, but where edges in $\delta(C)$ are modified to have cost $0$.

Let $\LP^*$ be the value of an optimal solution to LP~\eqref{LP:spanner},
and let $\LP^*(C)$ be the value of an optimal solution to $\LP(C)$.

\begin{lemma} \label{lem:decomposed_cost}
$\sum_{C \in P} \LP^*(C) \leq \LP^*$ for every partition $P\in \mathcal T$.
\end{lemma}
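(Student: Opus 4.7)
The plan is to take an optimal solution $(x^*, f^*)$ to LP~\eqref{LP:spanner} on $G$ and, for each cluster $C \in P$, assemble from it a feasible solution to $\LP(C)$ whose cost is at most $\sum_{e \in E(C)} c_e x^*_e$. Summing over $C$ and using that the sets $\{E(C)\}_{C \in P}$ are pairwise disjoint subsets of $E$ then yields the desired bound.

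Concretely, for each cluster $C$ I would set $x^C_e := x^*_e$ for $e \in E(C)$ and $x^C_e := 1$ for $e \in \delta(C)$, and set $f^C_P := f^*_P$ for every $(u,v) \in E(C)$ and every $P \in \mathcal P_{u,v}$, while setting $f^C_P := 0$ for all paths associated with edges $(u,v) \in \delta(C)$. A key observation that makes the flow transfer work cleanly is that for any $(u,v) \in E(C)$, every length-$2$ path $u \to z \to v$ in $G$ has middle vertex $z$ adjacent to $u \in C$, so $z \in C \cup N(C)$; hence $\mathcal P_{u,v}$ computed in $G(C)$ coincides with $\mathcal P_{u,v}$ computed in $G$, and the global flow values may be copied over verbatim.

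Checking feasibility of $(x^C, f^C)$ in $\LP(C)$ is then routine. Capacity constraints for $(u,v) \in E(C)$ are inherited from the global LP (note that $x^C_e \ge x^*_e$ for every $e$ in $G(C)$, so slack is never decreased), and those for $(u,v) \in \delta(C)$ are trivial since every corresponding $f^C_P$ is zero. Knapsack-cover constraints for $(u,v) \in E(C)$ are inherited directly; for $(u,v) \in \delta(C)$ they follow from $x^C_{(u,v)} = 1$, which by itself contributes $r+1-|W|$ to the left-hand side. The box constraints $0 \le x^C_e \le 1$ and $f^C_P \ge 0$ are immediate from the construction.

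The cost of $(x^C, f^C)$ evaluated in $\LP(C)$ equals $\sum_{e \in E(C)} c_e x^*_e$, because the modified costs zero out every $e \in \delta(C)$. Summing over $C \in P$ and observing that each edge $e \in E$ lies in at most one $E(C)$ (cross-cluster edges lie in none) gives
\[
\sum_{C \in P} \LP^*(C) \;\le\; \sum_{C \in P} \sum_{e \in E(C)} c_e x^*_e \;\le\; \sum_{e \in E} c_e x^*_e \;=\; \LP^*.
\]
The only step that requires any care is confirming that no length-$2$ path is lost when restricting to $G(C)$, which is exactly why $G(C)$ was defined on $C \cup N(C)$ rather than on $C$ alone; everything else is bookkeeping.
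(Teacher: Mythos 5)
Your proposal is correct and follows the same construction and argument as the paper's proof: copy $x^*$ on interior edges, set $x^C_e = 1$ on boundary edges, copy the flows for edges in $E(C)$, and use that all length-$2$ paths for such edges stay inside $G(C)=G[C\cup N(C)]$. The extra detail you supply (the explicit observation that the midpoint $z$ must lie in $C\cup N(C)$, and the note that $x^C_e \ge x^*_e$ so capacity slack only grows) is the same reasoning the paper uses, just spelled out a bit more.
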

\begin{proof}
Let $\tuple{x,f}$ be an optimal fractional solution to LP~\eqref{LP:spanner}.
We want to use this solution to build fractional solutions to $\LP(C)$ for all $C \in P$ whose total cost is at most $\LP^*$.
For each cluster $C \in P$, define a solution $\tuple{x^C,f^C}$ for $\LP(C)$
as follows:
Let $x^C_e = x_e$ if $e \in E(C)$ and let $x^C_e = 1$ if $e \in \delta(C)$.  Note that this already satisfies all of the knapsack-cover constraints for edges in $\delta(C)$.
For edges $(u,v)\in E(C)$, note that every path in $\mathcal P_{u,v}$
appears in $G(C)$, so we can set $f^C_P = f_P$ for these paths.
Since these flows satisfy the knapsack-cover constraints
in LP~\eqref{LP:spanner},
we now satisfy also the knapsack-cover constraints in $\LP(C)$.
All other flows $f^C_P$ (e.g. between vertices in $N(C)$) are set to $0$,
and obviously the capacity constraints are satisfied,
hence $\tuple{x^C, f^C}$ is a feasible solution to $\LP(C)$.

Since in $\LP(C)$ the edges in $\delta(C)$ have cost $0$,
and every edge of $E$ is in $E(C)$ for at most one cluster $C$,
\begin{equation*}
  \sum_{C \in P} \LP^*(C)
  = \sum_{C \in P} \sum_{e \in E(C)} c_e x^C_e
  \leq \sum_{e \in E} c_e x_e = \LP^*,
\end{equation*}
which proves the lemma.
\end{proof}

\smallskip
We can now give our distributed algorithm for
{\sc Minimum Cost $r$-Fault Tolerant $2$-Spanner}:
\smallskip

\begin{algorithm}[H]
\caption{Distributed algorithm for $r$-fault tolerant $2$-spanner.}
\label{alg:distributed_2spanner}
\For{$i\leftarrow 1$ \KwTo $t = O(\log n)$}{
  Sample a partition $P_i$ from $\mu$ using Lemma~\ref{lem:distributed_padded}
  \tcp*{we assume the center of each cluster $C\in P_i$ knows $G(C)$}

  The center of each cluster $C\in P_i$ solves $\LP(C)$
  using Lemma~\ref{lem:solve},
  and sends the solution $\tuple{x^{C,i}, f^{C,i}}$ to all vertices in $C$
  \;
}
For each edge $(u,v)\in E$, let $\mathcal I_{(u,v)}=\{i: P_i(u)=P_i(v)\}$
\tcp*{these are the iterations in which both endpoints are in same cluster}

$\tilde x_e \leftarrow \min\{1, \frac{4}{t}\sum_{i \in \mathcal I_e} x^{P_i(e), i}_e\}$
\tcp*{$P_i(e)$ is the cluster of $P_i$ containing both endpoints of $e$}

Round $\tilde x_e$ using Algorithm~\ref{alg:2spanner}
\tcp*{each edge is rounded by its endpoints}
\end{algorithm}

\begin{theorem} \label{thm:distributed_2spanner}
Algorithm~\ref{alg:distributed_2spanner} terminates in $O(\log^2 n)$ rounds
and computes (in expectation) an $O(\log n)$-approximation
to {\sc Minimum Cost $r$-Fault Tolerant $2$-Spanner}.
\end{theorem}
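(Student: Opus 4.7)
The plan is to bound separately the number of rounds, the expected cost, and the probability that the output is a valid $r$-fault tolerant $2$-spanner.

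For the round complexity, each of the $t = O(\log n)$ iterations of the main loop takes $O(\log n)$ rounds: sampling a padded decomposition via Lemma~\ref{lem:distributed_padded}; the cluster center collecting $G(C)$ by flooding, solving $\LP(C)$ locally via Lemma~\ref{lem:solve}, and broadcasting the solution $\tuple{x^{C,i}, f^{C,i}}$ back to $C$ (all feasible in $O(\log n)$ rounds since every vertex of $C$ is within distance $O(\log n)$ of its center); and the final rounding step via Algorithm~\ref{alg:2spanner} is purely local. This totals $O(\log^2 n)$ rounds.

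For the expected cost, I would apply Lemma~\ref{lem:decomposed_cost} to each iteration. Because edges in $\delta(C)$ have zero cost in $\LP(C)$ and the set of edges $e$ with $i \in \mathcal I_e$ is exactly $\bigcup_{C \in P_i} E(C)$, we get $\sum_{e\,:\,i \in \mathcal I_e} c_e\, x^{C_i,i}_e = \sum_{C \in P_i} \sum_{e \in E(C)} c_e\, x^{C,i}_e \leq \sum_{C \in P_i} \LP^*(C) \leq \LP^*$. Summing over $i \in [t]$ and multiplying by $4/t$ yields $\EX[\sum_e c_e \tilde x_e] \leq 4\LP^*$, and the rounding inflates this by a further $O(\log n)$ factor, exactly as in the proof of Theorem~\ref{thm:approx}.

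For correctness, I would build flows $\tilde f$ certifying that $\langle \tilde x, \tilde f \rangle$ is feasible for LP~\eqref{LP:spanner} (up to constant factors absorbed by $\alpha$), and then invoke the rounding analysis of Theorem~\ref{thm:approx} essentially verbatim. By Definition~\ref{def:padded}, $\Pr[i \in \mathcal I_e] \geq 1/2$ for every $e$; moreover, whenever the padding event $N(u) \subseteq P_i(u)$ holds, both the endpoint $v$ and the middle vertex $z$ of any length-$2$ path $P = (u,z,v) \in \mathcal P_{u,v}$ also lie in $P_i(u)$, so the event $\mathcal I_P = \{i : u, z, v \in P_i(u)\}$ has probability $\geq 1/2$ per iteration. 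Chernoff and a union bound yield $|\mathcal I_e|, |\mathcal I_P| \geq t/4$ for all $e$ and $P$ with high probability. Defining $\tilde f_P = (4/t)\sum_{i \in \mathcal I_P} f^{C_i,i}_P$ automatically satisfies the capacity constraints, since $\mathcal I_P \subseteq \mathcal I_e$ for every edge $e$ on $P$ and the local LPs satisfy their own capacity constraints. Finally, since $u, v$ lying in a common cluster $C_i$ implies $\mathcal P^{C_i}_{u,v} = \mathcal P_{u,v}$, averaging the local knapsack-cover constraints certifies the corresponding inequality for $\tilde x$ (up to a constant factor absorbed by doubling $\alpha$).

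The main obstacle will be the knapsack-cover side of the feasibility argument: the local LP in an iteration $i \in \mathcal I_{(u,v)} \setminus \mathcal I_P$ may route flow on $P$ through a middle vertex in $N(C_i) \setminus C_i$ (freely, since the boundary edges carry zero cost), and this flow is discarded by the restriction from $\mathcal I_{(u,v)}$ to $\mathcal I_P$ in our $\tilde f$. I plan to handle this either by explicitly constraining $\LP(C)$ to use only paths with middle vertex inside $C$ (and setting $x^{C,i}_{(u,v)} = 1$ when insufficient internal paths exist, a modification that Lemma~\ref{lem:decomposed_cost} still supports via the restriction of the global optimum), or by arguing that the padding property applied with a slightly larger padding radius keeps enough paths internal; in either case any constant-factor loss is absorbed into $\alpha$, preserving the final $O(\log n)$-approximation.
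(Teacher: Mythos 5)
Your overall plan — padded decompositions, per-cluster LPs, averaging the solutions across iterations, and reusing the rounding analysis of Theorem~\ref{thm:approx} — is exactly the paper's approach, and the round-complexity and cost arguments match. But the ``main obstacle'' you raise at the end is not actually an obstacle, and your two proposed fixes are both unnecessary (and the first is unsound). You already identify the right event: $N(u) \cup \{u\} \subseteq P_i(u)$. Let $\mathcal I'_e$ be the set of iterations for which this holds. Then for every $i \in \mathcal I'_e$, \emph{every} path $P = (u,z,v) \in \mathcal P_{u,v}$ has its middle vertex $z \in N(u)$ inside $P_i(u)$, so $\mathcal I'_e \subseteq \mathcal I_P$ for all $P$, and no flow is missing. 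You should simply define $\tilde f_P$ by averaging over the single common set $\mathcal I'_e$ (as the paper does, with normalization $1/|\mathcal I'_e|$ rather than $4/t$, which also keeps $\tilde f_P \leq 1$ and avoids a small but real clash with the $\min\{1,\cdot\}$ cap on $\tilde x_e$). Since $|\mathcal I'_e| \geq t/4$ w.h.p. by Chernoff and $\mathcal I'_e \subseteq \mathcal I_{(u,v)}$, both the capacity constraints and the knapsack-cover constraints of LP~\eqref{LP:spanner} follow by averaging the corresponding local constraints over $\mathcal I'_e$ — no constant-factor loss needs to be absorbed into $\alpha$.

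The flow routed through a boundary middle vertex $z \in N(C_i) \setminus C_i$ in some iteration $i \in \mathcal I_{(u,v)} \setminus \mathcal I'_e$ is simply never used, and that is fine: the knapsack-cover constraint only needs a \emph{lower bound} on the aggregated flow, and the iterations in $\mathcal I'_e$ alone already supply it. Your first proposed workaround — forbidding boundary middle vertices in $\LP(C)$ and forcing $x^C_{(u,v)}=1$ when too few internal paths exist — would actually break Lemma~\ref{lem:decomposed_cost}: the restriction of a feasible global solution to a cluster may place most of its $(u,v)$-flow on paths through $N(C)$, so forcing $x^C_{(u,v)}=1$ can make $\LP^*(C)$ strictly larger than the global solution's cost on $E(C)$, killing the $\sum_C \LP^*(C) \leq \LP^*$ bound. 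The second workaround (enlarging the padding radius) is also superfluous: the stated padding radius $1$ already gives $N(u) \subseteq P_i(u)$ with probability $\geq 1/2$, which is all that is needed.
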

\begin{proof}
We first prove the time bound.  Lemma~\ref{lem:distributed_padded} implies that sampling from $\mu$ takes only $O(\log n)$ rounds, and since the diameter of every cluster is at most $O(\log n)$ the other two steps of the loop also take only $O(\log n)$ rounds.
Since we execute the loop $O(\log n)$ times, the number of rounds needed to
complete the loop is at most $O(\log^2 n)$.
After the loop, each vertex can compute $x_e$ for all incident edges $e$
without any extra communication (since each endpoint of an edge $e$
knows $\mathcal I_e$ and the LP values for that iteration).
Finally, as already pointed out, the rounding of Algorithm~\ref{alg:2spanner} can be done locally, with one extra round used to make sure that both endpoints of an edge know if the edge was included by the rounding.  Thus the total number of rounds is $O(\log^2 n)$, as claimed.

To prove that this algorithm returns an $O(\log n)$-approximation,
we will show that \whp\ the $\tilde x_e$ values it computes form
a feasible solution to LP~\eqref{LP:spanner}
(when appropriate flow values $\tilde f_P$ are chosen)
of cost at most $O(\LP^*)$.
Once we have this, the analysis of Theorem~\ref{thm:approx} implies that
the rounding step outputs (in expectation) a spanner $G'=(V,E')$ whose cost is
$O(\log n) \sum_e c_e \tilde x_e \leq O(\log n) \LP^*$,
which is clearly an $O(\log n)$-approximation as asserted in the theorem.
To bound the cost, note that the $\tilde x_e/4$ values are just the averages of the $\LP(C)$ values for all rounds in which the edge $e$ does not have cost $0$.
In other words,
$\sum_e c_e \tilde x_e
  \leq \frac{4}{t} \sum_{i=1}^{t} \sum_{C \in P_i} \LP^*(C)
  \leq 4 \LP^*$,
where the final inequality is from Lemma~\ref{lem:decomposed_cost}.
So it just remains to show that the $\tilde x_e$'s form a feasible solution
to LP~\eqref{LP:spanner}.

To prove this, consider an edge $e = (u,v)$,
and let $\mathcal I'_e \subseteq \mathcal I_e$ be the set of iterations
$i$ in which $N(u) \cup \{u\}$ is all in the same cluster of $P_i$
(where we fix $u$ as one of the endpoints of $e$ in an arbitrary manner).
By the second property of padded decompositions,
the probability that $N(u) \cup \{u\}$ is all in the same cluster is at least $1/2$.
The iterations are independent, so a straightforward Chernoff bound implies
that $\Pr[|\mathcal I'_e| \geq t/4] \geq 1-1/n^3$.
For a path $P \in \mathcal P_{u,v}$, set
$\tilde f_P = \frac{1}{|\mathcal I'_{(u,v)}|} \sum_{i \in \mathcal I'_{(u,v)}} f^{P_i(u,v), i}_P$.
In other words, the flow along a path from $u$ to $v$ is equal to the average
flow along it in the LP solutions that were computed in iterations when
$N(u) \cup \{u\}$ were all in the same cluster.

The capacity constraints are obviously satisfied, since each iteration
satisfies the capacity constraints, and the edge capacities are scaled
by $4/t$ while flows are scaled by a factor that can be only smaller.
Note that here we depend on the fact that \emph{all} of $N(u)$ is in
the same cluster as $u$; if some vertex $z \in N(u)$ were in a
different cluster, then in the LP solution for the cluster containing
$u$ and $v$ there could be flow sent from $u$ to $v$ through $z$.
This flow would not have the corresponding capacity added to the
$\tilde x_e$ variables, which would be a problem.

Similarly, consider the knapsack-cover constraint for some $(u,v)\in E$ and some $W \subseteq \mathcal P_{u,v}$ with $|W| \leq r$.  Then since we could send enough flow in each iteration in $\mathcal I'_{(u,v)}$, when we take the average we can still send enough flow, i.e.
\begin{align*}
  (r+1 - |W|) \tilde x_{u,v} + \sum_{P \in \mathcal P_{u,v} \setminus W} \tilde f_P
  & \geq \sum_{i \in \mathcal I'_{(u,v)}} \Big(\tfrac{4}{t} (r+1-|W|) x^i_{u,v} + \sum_{P \in \mathcal P_{u,v} \setminus W} \tfrac{1}{|\mathcal I'_{(u,v)}|} f^{P_i(u,v), i}_P  \Big) \\
  & \geq \frac{1}{|\mathcal I'_{(u,v)}|} \sum_{i \in \mathcal I'_{(u,v)}} \Big((r+1-|W|) x^i_{u,v} + \sum_{P \in \mathcal P_{u,v} \setminus W} f^{P_i(u,v), i}_P  \Big) \\
  & = r+1 - |W|,
\end{align*}
where the last inequality is by the knapsack-cover constraint for the
cluster $P_i(u,v)$.
Thus we have a valid LP solution, completing the proof.
\end{proof}

\paragraph{Remark:}
While for our purposes it was enough to solve the LP to within a constant
factor (since we lose an $O(\log n)$ factor in the rounding anyway),
it is easy to see that we could in fact solve the LP to within
a $(1+\epsilon)$ factor.
First, we could change the padded decomposition to have the padding property
($u$ and $N(u)$ are all in the same cluster) to hold with probability
at least $1-\epsilon$,
which would require increasing the diameter of the clusters,
and thus the number of rounds it takes to solve the LP,
by an $O(1/\epsilon)$ factor.
Second, when we apply the Chernoff bound, instead of asking
the number of times the padding event occurs to be at least $t/4$,
we could ask that it is at least $(1-\epsilon)^2 t$.
By increasing $t$ by an $O(1/\epsilon^2)$ factor,
we still get the right probabilities.
Overall, the number of rounds would now be $O(\veps^{-3} \log n)$.

\section{Conclusions and Future Work}

This paper considers the problem of constructing
$r$-fault tolerant spanners and gives two basic constructions.
For general stretch bounds $k \geq 3$,
we show how to construct $r$-fault tolerant $k$-spanners whose size
is at most polynomially (in $r$) larger than spanners without fault tolerance,
improving over the previous exponential dependency (on $r$) of~\cite{CLPR09}.
Our main technique is oversampling failure sets,
in order to handle many of them in one iteration.  An interesting open question is to provide lower bounds on the size of the best $r$-fault tolerant $k$-spanner; to the best of our knowledge, no such bounds are known other than those that apply even when $r=0$.

For $k=2$ and unit edge lengths we design an $O(\log n)$-approximation
algorithm (for all $r$),
improving over the previous $O(r \log n)$ factor of~\cite{DK10}
and showing that the approximation ratio could be independent of
the desired amount of fault tolerance $r$.
Our main technique here is to design a new linear programming relaxation that
includes the exponentially many knapsack-cover inequalities of~\cite{CFLP00}.  We also provided a distributed version of the algorithm, and showed that when all edge costs are $1$ the approximation can be improved to $O(\log \Delta)$.  An interesting open question is to improve this ratio to $O(\log |E| /|V|)$, which would match the approximation known for the non-fault tolerant version.

\bibliographystyle{alphainit}
\bibliography{robi,spanner}

\end{document}